\newtheorem{theorem}{Theorem}[section]
\numberwithin{equation}{section}
\newcommand{\beginsupplement}{%
        \setcounter{table}{0}
        \renewcommand{\thetable}{S\arabic{table}}%
        \setcounter{figure}{0}
        \renewcommand{\thefigure}{S\arabic{figure}}%
        \setcounter{section}{0}
        \renewcommand{\thesection}{\arabic{section}}%
     }
\definecolor{ao}{rgb}{0.0, 0.5, 0.0}
\title{A preplanned multi-stage platform trial for discovering multiple superior treatments with control of FWER and power}
\author{Peter Greenstreet$^{1,\star}$, Thomas Jaki$^{2,3}$, Alun Bedding$^{4}$, Pavel Mozgunov$^{2}$
\\  \footnotesize{$^1$ Department of Mathematics and
Statistics, Lancaster University, Lancaster, UK} 
\\  \footnotesize{$^2$ MRC Biostatistics Unit, University of
Cambridge, Cambridge, UK} 
\\  \footnotesize{$^3$ University of Regensburg, Regensburg, Germany} 
\\  \footnotesize{$^4$ Roche Products Ltd., Welwyn Garden City, UK}
\\  \footnotesize{$^\star$ p.greenstreet@lancaster.ac.uk} }
\date{}
\begin{document}
\maketitle

\begin{abstract}
There is a growing interest in the implementation of platform trials, which provide the flexibility to incorporate new treatment arms during the trial and the ability to halt treatments early based on lack of benefit or observed superiority. In such trials, it can be important to ensure that error rates are controlled. This paper introduces a multi-stage design that enables the addition of new treatment arms, at any point, in a pre-planned manner within a platform trial, while still maintaining control over the family-wise error rate. This paper focuses on finding the required sample size to achieve a desired level of statistical power when treatments are continued to be tested even after a superior treatment has already been found. This may be of interest if there are other sponsors treatments which are also superior to the current control or multiple doses being tested. The calculations to determine 
the expected sample size is given. A motivating trial is presented in which the sample size of different configurations is studied. Additionally the approach is compared to running multiple separate trials and it is shown that in many scenarios if family wise error rate control is needed there may not be benefit in using a platform trial when comparing the sample size of the trial.
\end{abstract}

%




%
\section{Introduction}
\label{sec:Intro}
Platform trials are a type of trial design which can aim to reduce the amount of time and cost of clinical trials and
in recent years, there has been an increase in the utilization of such trials, including during the COVID-19 pandemic \citep{StallardNigel2020EADf,lee2021}. Clinical trials take many years to run and can cost billions of dollars \citep{MullardAsher2018Hmdp}. During this time it is not uncommon for new promising treatments to emerge and become ready to join the current phase later  \citep{Choodari-OskooeiBabak2020Anea}. Therefore it may be advantageous to include these treatments into an ongoing trial. This can have multiple potential benefits including: shared trial infrastructure; the possibility to use a shared control group; less administrative and logistical effort than setting up separate trials and enhance the recruitment \citep{BurnettThomas2020Aeta, MeurerWilliamJ2012ACTA}. This results in useful therapies potentially being identified faster while reducing cost and time \citep{cohen2015adding}. 
\par
There is an ongoing discussion about how to add new treatments to clinical trials \citep{cohen2015adding,lee2021} in both a pre-planned and in an unplanned manor \citep{GreenstreetPeter2021Ammp, BurnettThomas2020Aeta}. In both \cite{MaxineBennett2020Dfaa, Choodari-OskooeiBabak2020Anea} approaches are proposed which extend the Dunnett test \citep{DunnettCharlesW1955AMCP} to allow for unplanned  additional arms to be included into multi-arm trials while still controlling the family wise error rate (FWER). This methodology does not incorporate the possibility of interim analyses. 
\par
FWER is often considered to be one of the strongest types of type I error control in a multi-arm trial \citep{WasonJames2016Srfm}. There are other approaches one may wish to consider such as pairwise error rate (PWER) and the false discovery rate (FDR) \citep{RobertsonDavidS2022Oecf, CuiXinping2023SPfP, BrattonDanielJ2016TIer, Choodari-OskooeiBabak2020Anea}. However as discussed in \cite{WasonJamesMS2014Cfmi} there are scenarios where FWER is seen as the recommended error control, and it can be a regulatory requirement.
\par
One may wish to include interim analyses as they allow for ineffective treatments to be dropped for futility earlier and allow treatments to stop early if they are found superior to the control. Therefore potentially improving the efficiency of design of a clinical trial by decreasing the expected sample sizes and costs of a trial  \citep{PocockStuartJ.1977GSMi,ToddSusan2001Iaas,WasonJames2016Srfm}. Multi-arm multi-stage (MAMS) designs  \citep{MagirrD.2012AgDt, RoystonPatrick2003Ndfm} allow interim analyses while still allowing several treatments to be evaluated within one study, but do not allow for additional arms to be added throughout the trial. \cite{BurnettThomas2020Aeta} have developed an approach that builds on \cite{HommelGerhard2001AMoH} to incorporate unplanned additional treatment arms to be added to a trial already in progress using the conditional error principle \citep{ProschanMA1995DEoS}. This allows for modifications during the course of a trial. However due to the unplanned nature of the adaptation, later treatments can be greatly underpowered compared to arms which begin the trial.
\par
In a recent paper \cite{GreenstreetPeter2021Ammp} proposed a preplanned approach to adding additional arms in which interim analyses can be conducted and multiple arms can be evaluated with some arms being added at later time points. In this work the trial was powered assuming that only one treatment may be taken forward. However as discussed in the work by \cite{UrachS.2016Mgsd, SerraAlessandra2022Aorm} this may not always be the case. For example one may be interested in lower doses; or multiple treatments from different sponsors; or interested if another treatment has preferable secondary outcomes if it also meets the primary outcome. Furthermore in \cite{GreenstreetPeter2021Ammp} treatment arms can only be added when a interim analysis happens, this can greatly restrict when arms can join the trial resulting in potentially large time periods that a new treatment is available before an interim is conducted so able to join the trial.
\par
In this work, we provide an analytical method for adding of treatments at any point to a multi-arm multi-stage  trial in a pre-planned manner, while still controlling the statistical errors. This work will focus on trials in which one is interested in continuing to investigate the other treatments even after a superior treatment has been found. In addition in this work multiple types of power will be considered, and will prove that the conjunctive power of the study is at its lowest for a given sample size when all the active treatments have a clinically relevant effect, where the conjunctive power is the probability of finding all the active treatments with a clinically relevant effect. 
The methodology discussed here can be used to create multiple designs for each point the additional treatments may be added into the trial. 
 This is due to the model flexibility, as the additional arms do not need to be added when a interim happens, resulting in new active arms being able to be added faster into the platform trial.
\par
This work will focus predominantly on the case where one has equal allocation ratio across all the active treatments and the same number of interim analyses per treatment with the same boundary shape. This is to help mitigate issues with time trends  \citep{AltmanDouglasG.1988Theo, GetzKennethA2017TwTi}  when changing allocation ratio mid trial \citep{ProschanMichael2020RtTo, RoigMartaBofill2023Oasi}. However the proposed methodology is general and therefore can be implemented for when there is not equal allocation ratio across all the active treatments, however one needs to be cautious of potential time trend effects.
\par
We begin this work by analytically calculating the FWER and power of the study and use these to calculate both the stopping boundaries and sample size. Then in Section \ref{sec:expsam} the equations for sample size distribution and expected sample size are given. A trial example of FLAIR \citep{HowardDenaR2021Apti}, in Section \ref{sec:Trialexample}, is used to motivate a hypothetical trial of interest. The sample size and stopping boundaries are found for multiple types of power control and the effect of different treatment effects is studied. Then the trial designs are then compared to running multiple separate trials. Finally in Section \ref{sec:Discussion} there is a discussion of the paper and this introduces areas for further research.

\section{Methodology}
\label{sec:Setting}
\subsection{Setting}
\label{subsec:Setting}
In the clinical trial design considered in this work K experimental arms effectiveness is compared to a common control arm. The trial has $K^\star$  treatments starting at the beginning of the trial, and the remaining $K-K^\star$ treatments being added at later points into the platform. The primary outcome measured for each patient is assumed to be independent, continuous, and follows a normal distribution with a known variance ($\sigma^2$). 
\par
The points at which each active treatment arm is added are predetermined, but can be set to any point within the trial.
Each of the $K$ treatments is potentially tested at a series of analyses indexed by $j=1,\hdots,J_k$ where $J_k$ is the maximum number of analyses for a given treatment $k = 1,\hdots, K$. Let $n(k)$ denote the number of patients recruited to the control treatment before treatment $k$ is added to the platform trial and define the vector of adding times by $\mathbf{n(K)}=(n(1),\hdots,n(K))$. Therefore for treatments that start at the beginning of the trial $n(k)=0$. 
We also denote $n_{k,j}$ as the number of patients recruited to treatment $k$ by the end of it's $j^\text{th}$ stage and define $n_{0,k,j}$ as the total number of patients recruited to the control at the end of treatment $k$'s $j^\text{th}$ stage. We define $n_k=n_{k,1}$ as the number recruited to the first stage of treatment $k$, $k=1,\hdots,K$. Similarly we define $r_{k,j}$ and $r_{0,k,j}$ as the ratio of patients recruited treatment $k$ and the control by treatment $k$'s $j^\text{th}$ stage, respectively. Also $r(k)$ denotes the ratio of patients recruited to the control before treatment $k$  is added to the trial. 
For example if a trial was planned to have equal number of patients per stage and a treatment ($k'$) was added at the first interim then $r(k')=1$ and at the first stage for $k'$, $r_{0,k',1}=2$. The total sample size of a trial is denoted by $N$. The maximum total planned sample size is $\max(N)= \sum_{k=1}^K n_{k,J_k}+\max_{k\in 1,\hdots, K}(n_{0,k,J_k})$. 
\par
Throughout the trial, the control arm is recruited and maintained for the entire duration. The comparisons between the control arm and the active treatment arms are based on concurrent controls, meaning that only participants recruited to the control arm at the same time as the corresponding active arm are used in the comparisons. Work on the use of non-concurrent controls include \cite{lee2020including, MarschnerIanC2022Aoap}.
\par
The null hypotheses of interest are
$
H_{01}: \mu_1 \leq \mu_0, H_{02}: \mu_2 \leq \mu_0, ... , H_{0K}: \mu_K \leq \mu_0,
$
where $\mu_1, \hdots, \mu_K$ are the mean responses on the $K$ experimental treatments and $\mu_0$ is the mean response of the control group. The global null hypothesis, $\mu_0=\mu_1=\mu_2=\hdots=\mu_K$ is denoted by $H_{G}$. At analysis $j$ for treatment $k$, to test $H_{0k}$ it is assumed that responses, $X_{k,i}$, from patients $i=1,\hdots, n_{k,j}$ are observed, as well as the responses $X_{0,i}$ from patients $i=n(k)+1,\hdots, n_{0,k,j}$. These are the outcomes of the patients allocated to the control which have been recruited since treatment $k$ has been added into the trial up to the $j^\text{th}$ analysis of treatment $k$. The null hypotheses are tested using the test statistics
\begin{equation*}
Z_{k,j}=\frac{ n_{k,j}^{-1} \sum^{n_{k,j}}_{i=1} X_{k,i} - (n_{0,k,j}-n(k))^{-1} \sum^{n_{0,k,j}}_{i=n(k)+1} X_{0,i} }{\sigma\sqrt{(n_{k,j})^{-1}+ (n_{0,k,j}-n(k))^{-1} }}.
\end{equation*}

The decision-making for the trial is made by the upper and lower stopping boundaries, denoted as $U_{k}=(u_{k,1},\hdots,u_{k,J_k})$ and $L_{k}=(l_{k,1},\hdots,l_{k,J_k})$. These boundaries are utilized to determine whether to continue or halt a treatment arm or even the whole trial at various stages.  The decision-making process is as follows: if the test statistic for treatment $k$ at stage $j$ exceeds the upper boundary $u_{k,j}$, the null hypothesis $H_{0k}$ is rejected, and the treatment is stopped with the conclusion that it is superior to the control. Conversely, if $Z_{k,j}$ falls below the lower boundary $l_{k,j}$, treatment k is stopped for futility for all subsequent stages of the trial. If neither the superiority nor futility conditions are met, $l_{k,j} \leq Z_{k,j} \leq u_{k,j}$, treatment $k$ proceeds to it's next stage $j+1$. If all the active treatments are stopped than the trial stops. These bounds are found to control the type I error of desire for the trial. In this work we consider the familywise error rate (FWER) as the  type I error control of focus as discussed in Section \ref{Sec:FWER}.

\subsection{Familywise error rate (FWER)}
\label{Sec:FWER}
The FWER in the strong sense is a way of defining the type I error of a trial with multiple hypotheses and is defined as
\begin{equation}
P(\text{reject at least one true } H_{0k} \text{ under any null configuation}, k=1,\hdots, K) \leq \alpha.
\label{def:FWER}
\end{equation}
where $\alpha$ is the desired level of control for the FWER. 
As proven in \cite{GreenstreetPeter2021Ammp} which builds on \cite{MagirrD.2012AgDt}, one can show that the FWER is controlled in the strong sense under the global null, as given in the Supporting Information Section 1.
The FWER under the global null hypothesis is equal to

\begin{equation}
1-\sum_{\substack{j_k=1 \\ k=1,2,\ldots,K}}^{J_k} \Phi(\mathbf{L}_{\mathbf{j_k}}\mathbf{(0)},\mathbf{U}_{\mathbf{j_k}}\mathbf{(0)},\Sigma_{\mathbf{j_k}}).
\label{FWERequationnew}
\end{equation}
Here $\Phi(\cdot)$ denotes the multivariate standard normal distribution function, and $\mathbf{j_k}=(j_1,\hdots,j_K)$. With $\mathbf{j_k}$ one can define the vector of upper and lower limits from the multivariate standard normal distribution function as $\mathbf{U}_{\mathbf{j_k}}\mathbf{(0)}=(U_{1,j_1}(0), \hdots, U_{K,j_K}(0))$ and $\mathbf{L}_{\mathbf{j_k}}\mathbf{(0)}=(L_{1,j_1}(0), \hdots, L_{K,j_K}(0))$ where $U_{k,j_k}(0)=(u_{k,1},\hdots,l_{k,j_k})$ and $L_{k,j_k}(0)=(l_{k,1},\hdots,-\infty)$ respectively. $U_{k,j_k}(0)$ and $L_{k,j_k}(0)$ represent the upper and lower limits for treatment $k$ given $j_k$ for the multivariate standard normal distribution function.   The correlation matrix  $\Sigma_{\mathbf{j_k}}$ complete correlation structure is
\begin{equation}
\Sigma_{\mathbf{j_k}}= \begin{pmatrix}
\rho_{(1,1),(1,1)} & \rho_{(1,1),(1,2)} & \hdots & \rho_{(1,1),(1,j_1)} & \rho_{(1,1),(2,1)} & \hdots & \rho_{(1,1),(K,j_k)} \\
\\
\rho_{(1,2),(1,1)} & \rho_{(1,2),(1,2)} & \hdots & \rho_{(1,2),(1,j_1)} & \rho_{(1,2),(2,1)} & \hdots & \rho_{(1,2),(K,j_k)} 
\\
\vdots & \vdots & \ddots & \vdots & \vdots & \ddots & \vdots
\\
\rho_{(1,j_1),(1,1)} & \rho_{(1,j_1),(1,2)} & \hdots & \rho_{(1,j_1),(1,j_1)} & \rho_{(1,j_1),(2,1)} & \hdots & \rho_{(1,j_1),(K,j_k)} 
\\
\rho_{(2,1),(1,1)} & \rho_{(2,1),(1,2)} & \hdots & \rho_{(2,1),(1,j_1)} & \rho_{(2,1),(2,1)} & \hdots & \rho_{(2,1),(K,j_k)} 
\\
\vdots & \vdots & \ddots & \vdots & \vdots & \ddots & \vdots
\\
\rho_{(K,j_k),(1,1)} & \rho_{(K,j_k),(1,2)} & \hdots & \rho_{(K,j_k),(1,j_1)} & \rho_{(K,j_k),(2,1)} & \hdots & \rho_{(K,j_k),(K,j_k)} \\
\end{pmatrix}.
\label{equ:FullCor}
\end{equation}
where $\rho_{(k,j),(k^\star,j^\star)}$ equals one of the following: If $k=k^\star$ and $j=j^\star$ then $\rho_{(k,j),(k^\star,j^\star)}=1$; If $k=k^\star$ and $j<j^\star$ then
\begin{align*}
\rho_{(k,j),(k^\star,j^\star)}= \Bigg{(}\sqrt{r_{k,j}^{-1}+(r_{0,k,j}-r(k))^{-1}} & \sqrt{r_{k,j^\star}^{-1}+(r_{0,k,j^\star}-r(k))^{-1}} \Bigg{)}^{-1} 
\\
& \Bigg{(} \frac{1}{r_{k,j^\star}} + \frac{1}{r_{0,k,j^\star}-r(k)}  \Bigg{)};
\end{align*} 
and if $k \neq k^\star$ where $n(k)<n(k^\star)$ then
\begin{align*}
\rho_{(k,j),(k^\star,j^\star)}= \max\Bigg{[}0, &  \Bigg{(}\sqrt{r_{k,j}^{-1}+(r_{0,k,j}-r(k))^{-1}}\sqrt{r_{k^\star,j^\star}^{-1}+(r_{0,k^\star,j^\star}-r(k^\star))^{-1}} \Bigg{)}^{-1}
\\  & \Bigg{(} \frac{\min[r_{0,k,j}-r(k^\star) , r_{0,k^\star,j^\star}-r(k^\star))}{[r_{0,k,j}-r(k)] [r_{0,k^\star,j^\star}-r(k^\star)]}  \Bigg{)} \Bigg{]}.
\end{align*}
\par 
As seen here if treatment $k^\star$ is added to the platform trial after the $j$ stage for treatment $k$ then the correlation equals 0 as there is no shared controls.
The proposed methodology allows for different critical boundaries to be used for each treatment $k$ as shown in Equation \eqref{FWERequationnew}.
 
\par
If it is assumed that there is equal number of stages per treatment and equal allocation across all the active treatments then, as a result, if one is using the same stopping boundary shape one can simply just calculate the FWER. This is because it results in equal pairwise error rate (PWER) for each treatment \citep{BrattonDanielJ2016TIer, Choodari-OskooeiBabak2020Anea, GreenstreetPeter2021Ammp}. This 
 removes the potential issue of time trends with changing allocation ratios. Therefore to find the boundaries one can use a single scalar parameter $a$ with the functions $L_{k}=f(a)$ and $U_{k}=g(a)$ where $f$ and $g$ are the functions for the shape of the upper and lower boundaries respectively. This is similar to the method presented in \cite{MagirrD.2012AgDt}.

%
\subsection{Power}
\label{subsec:power}
When designing a multi-arm trial in which all treatments get tested until they are stopped for futility or superiority, regardless of the other treatments,  different definitions of power could be considered. The power of a study is focused on the probability that the trial results in some or all of the treatments going forward. The sample size of the study is then found to ensure that the chosen power is greater than or equal to some chosen value $1-\beta$.
\par
One may be interested in ensuring that at least one treatment is taken forward from the study. This can be split into two types of power discussed in the literature. The first is the disjunctive power \citep{UrachS.2016Mgsd, Choodari-OskooeiBabak2020Anea, HamasakiToshimitsu2021Ostc} which is the probability of taking at least one treatment forward. The second is the pairwise power which is the probability of taking forward a given treatment which has a clinically relevant effect \citep{Choodari-OskooeiBabak2020Anea, RoystonPatrick2011Dfct}. 
In the Supporting Information Section 2 the equations needed to calculate the disjunctive power ($P_D$) are given. 
\par
Another way of thinking of powering a study is the probability of taking forward all the treatments which have a clinically relevant effect. This is known as the conjunctive power of a study \citep{UrachS.2016Mgsd, Choodari-OskooeiBabak2020Anea, HamasakiToshimitsu2021Ostc, SerraAlessandra2022Aorm}. 
For the conjunctive power we prove that it is lowest when all the treatments have the clinically relevant effect.

\subsubsection{Pairwise power}
The pairwise power of a treatment is independent of other active treatments. This is because the other active treatments effect has no influence on the treatment of interest as these are independent. Therefore we only need to consider the probability that the treatment of interest with a clinically relevant effect is found superior to the control.  The pairwise power for treatment $k$ ($P_{pw,k}$) with the clinically relevant effect $\theta'$ is: 
\begin{equation}
P_{pw,k}=\sum^{J_k}_{j=1} \Phi(U^{+}_{k,j}(\theta'),L^{+}_{k,j}(\theta'),\ddot{\Sigma}_{k,j}), 
\label{PWER}
\end{equation}
with
\begin{align}
\begin{split}
L^{+}_{k,j}(\theta_k)=(l_{k,1}&-\frac{\theta_k}{\sqrt{I_{k,1}} },\hdots, l_{k,j-1}-\frac{\theta_k}{\sqrt{I_{k,j-1} }}, u_{k,j}-\frac{\theta_k}{\sqrt{I_{k,j-1} }} ) 
\end{split}
\label{equ:Lplus}
\\
U^{+}_{k,j}(\theta_k)=(u_{k,1}& -\frac{\theta_k}{\sqrt{I_{k,j-1} }},\hdots, u_{k,j-1} -\frac{\theta_k}{\sqrt{I_{k,j-1} }}, \infty).
\label{equ:Uplus}
\end{align} 
and $
I_{k,j}=\sigma^2( n_{k,j}^{-1}+(n_{0,k,j}-n(k))^{-1})
$. The $(i,i^\star)^\text{th}$ element $(i \leq i^\star)$ of the covariance matrix $\ddot{\Sigma}_{k,j}$  is
\begin{align*}
\Bigg{(}\sqrt{r_{k,i}^{-1}+(r_{0,k,i}-r(k))^{-1}}\sqrt{r_{k,i^\star}^{-1}+(r_{0,k,i^\star}-r(k))^{-1}} \Bigg{)}^{-1}  \Bigg{(} \frac{1}{r_{k,i^\star}} + \frac{1}{r_{0,k,i^\star}-r(k)}  \Bigg{)}.
\end{align*}
One can then design the trial so that the pairwise power for each treatment $k$ ($P_{pw,k}$) is greater than or equal to some chosen $1-\beta$ for every treatment. If one has an equal number of stages per treatment and equal allocation across all the active treatments with the same stopping boundaries, this ensures that pairwise power is equal for each treatment so $n_{k}=n_{k^\star}$ for all $k,k^\star \in 1,\hdots, K$. Therefore we define $n=n_k$ for all $k \in 1,\hdots, K$. To ensure pairwise power is controlled, keep increasing $n$ until $P_{pw} \geq 1-\beta$ where $P_{pw}=P_{pw,k}$ for all $k \in 1,\hdots, K$. 
\par
If one designing a trial in which there is a set number of patients to the control before an active treatment $k$ is added, so $n(k)$ is predefined before calculating the boundaries and sample size, one needs to use an approach such as the Algorithm below. This is because when the sample size increases there is no increase in $n(k)$ for all $k$. This results in a change in the allocation ratio between $r(k)$ and $r_{0,k,j}$ for each $j$. Therefore requiring the bounds to be recalculated for the given $r(k)$. If one focus is on the new arms being added after a set percentage of the way through the trial this issue no longer persists, as the allocation ratio stays the same so the bounds can be calculated ones.

\begin{algorithm}[]
\caption{Iterative approach to compute the $n$ for the pairwise power with predefined $\mathbf{n(K)}$}
\begin{itemize}
\item[0] Begin by assuming $\mathbf{n(K)}=(0,0,\hdots,0)$ and find the stopping boundaries to control the FWER. Now calculate $n$ such that the pairwise power is greater then or equal to a pre-specified ($1-\beta$). Then repeat the following iterative steps until the pairwise power, given the true $\mathbf{n(K)}$, is greater than ($1-\beta$):
\par
\item[1] Find the stopping boundaries to control the FWER for the true predefined $\mathbf{n(K)}$ given the current $n$.
\item[2] Calculate $P_{pw}$ for the given boundaries.
\par
\item[3] If $P_{pw} \geq 1-\beta$ then stop, else increase $n$ by 1 and repeat steps 1-3.
\end{itemize}
\label{Alg:pairwisepower}
\end{algorithm}

\subsubsection{Conjunctive power}
The conjunctive power is defined as the probability of taking forward all the treatments which have a clinically relevant effect.
We begin by proving when the conjunctive power is at its lowest.
We define the events
\begin{align*}
B_{k,j}(\theta_k)=&[ l_{k,j}  + (\mu_k-\mu_0 -\theta_k)I^{1/2}_{k,j} < Z_{k,j}  <  u_{k,j} + (\mu_k-\mu_0 -\theta_k)I^{1/2}_{k,j}],
\\
C_{k,j}(\theta_k)=&[Z_{k,j} > u_{k,j} + (\mu_k-\mu_0 -\theta_k)I^{1/2}_{k,j}],
\end{align*}
where $B_{k,j}(\theta_k)$ defines the event that treatment $k$ continues to the next stage and  $C_{k,j}(\theta_k)$ defines the event that treatment $k$ is found superior to the control at stage $j$.
If $\mu_k -\mu_0=\theta_k$ for $k=1,\hdots, K$, the event that $H_{01},\hdots, H_{0K}$ are all rejected $(\bar{W}_K(\Theta))$ is equivalent to
\begin{align*}
\bar{W}_K(\Theta)
& = \bigcap_{k \in \{m_1, \hdots, m_K \} } \Bigg{(} \bigcup^{J_k}_{j=1} \Bigg{[} \bigg{(} \bigcap^{j-1}_{i=1}B_{k,i}(\theta_k) \bigg{)} \cap C_{k,j}(\theta_k) \Bigg{]} \Bigg{)},
\end{align*}
where $\Theta= \{\theta_1,\theta_2,\hdots,\theta_K\}$.
\begin{theorem}
For any $\Theta$, $P( \text{reject all } H_{0k} \text{ for which }
\theta_k \geq \theta'| \Theta) \geq P(\text{reject all } H_{0k} \text{ for which }
\theta_k \geq \theta'| (\mu_1=\mu_2=\hdots=\mu_K=\mu_0+\theta'))$.
\label{the:TotalPower}
\end{theorem}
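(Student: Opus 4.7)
The strategy is a two-step monotone reduction from an arbitrary $\Theta=(\theta_1,\ldots,\theta_K)$ to the uniform configuration $\theta_1=\cdots=\theta_K=\theta'$. Writing $S=\{k:\theta_k\geq\theta'\}$ and $R_k(\theta_k)=\bigcup_{j=1}^{J_k}[(\bigcap_{i<j}B_{k,i}(\theta_k))\cap C_{k,j}(\theta_k)]$ for the event that $H_{0k}$ is rejected, the target inequality reads
\[
P\!\left(\bigcap_{k\in S}R_k(\theta_k)\,\Big|\,\Theta\right)\;\geq\;P\!\left(\bigcap_{k=1}^K R_k(\theta')\,\Big|\,\theta_1=\cdots=\theta_K=\theta'\right).
\]
In step one I lower every $\theta_k$ with $k\in S$ down to $\theta'$, arguing that the probability of $\bigcap_{k\in S}R_k$ can only decrease. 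In step two I observe that at the resulting configuration this probability equals the corresponding probability at the fully uniform configuration, and then enlarge the intersection from $S$ to all of $\{1,\ldots,K\}$, which can only shrink the event.

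The core technical content is the monotonicity used in step one, which I would establish by coupling. Write $Z_{k,j}=\tilde Z_{k,j}+\theta_k/\sqrt{I_{k,j}}$, where $\tilde Z_{k,j}$ is the centered noise contribution built from the patient responses alone and is independent of $(\mu_1,\ldots,\mu_K)$. Fixing the noise once and for all on a common probability space, varying $\theta_k$ shifts $Z_{k,j}$ deterministically and leaves $Z_{k',\cdot}$ unchanged for $k'\neq k$, because the shared-control responses depend only on $\mu_0$. On this coupled space the rejection event for arm $k$ equals $\{\tau_k^{\mathrm{up}}<\tau_k^{\mathrm{low}}\}\cap\{\tau_k^{\mathrm{up}}<\infty\}$, where $\tau_k^{\mathrm{up}}=\min\{j:Z_{k,j}>u_{k,j}\}$ and $\tau_k^{\mathrm{low}}=\min\{j:Z_{k,j}<l_{k,j}\}$ with the convention $\min\emptyset=\infty$. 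A larger $\theta_k$ can only decrease $\tau_k^{\mathrm{up}}$ and increase $\tau_k^{\mathrm{low}}$, so $R_k$ is preserved pathwise; replacing $\theta_k$ by the smaller $\theta'$ therefore shrinks $R_k$ and hence shrinks the intersection over $S$. Iterating over $k\in S$ yields step one.

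Step two uses that the intersection $\bigcap_{k\in S}R_k(\theta')$ only involves $(Z_{k,\cdot})_{k\in S}$, whose joint distribution depends solely on $(\theta_k)_{k\in S}=(\theta',\ldots,\theta')$ together with $\mu_0$ and the shared control noise; filling in $\theta_k=\theta'$ for the remaining $k\notin S$ does not change this probability. The final inclusion $\bigcap_{k=1}^K R_k(\theta')\subseteq\bigcap_{k\in S}R_k(\theta')$ under the uniform configuration then yields the desired inequality.

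The main obstacle is the pathwise monotonicity of $R_k$, because this event is not an upper set in $(Z_{k,1},\ldots,Z_{k,J_k})$: a sufficiently negative $Z_{k,j}$ triggers futility stopping and kills any chance of later rejection, so a naive "increasing-in-the-shift" argument on individual test statistics is not enough. The $(\tau_k^{\mathrm{up}},\tau_k^{\mathrm{low}})$ reformulation is what sidesteps this, relying on $l_{k,j}\leq u_{k,j}$ and on the shift being a single-signed deterministic function of $\theta_k$.
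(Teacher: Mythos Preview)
Your proposal is correct and follows essentially the same approach as the paper: both arguments establish the pathwise monotonicity of the single-arm rejection event in $\theta_k$ (the paper via a direct set-inclusion argument on the events $B_{k,j}(\theta_k)$ and $C_{k,j}(\theta_k)$, you via the equivalent $(\tau_k^{\mathrm{up}},\tau_k^{\mathrm{low}})$ reformulation), and then apply the same two-step reduction---first lowering each $\theta_k$ with $k\in S$ to $\theta'$, then enlarging the intersection from $S$ to $\{1,\ldots,K\}$. Your stopping-time packaging is a slightly cleaner way of expressing the same casework the paper does explicitly.
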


\par

\begin{proof}
For any $\epsilon_k<0$,
\begin{align*}
\bigcup^{J_k}_{j=1}  \Bigg{[} \bigg{(} \bigcap^{j-1}_{i=1}B_{k,i}(\theta_k+\epsilon_k) \bigg{)} \cap C_{k,j}(\theta_k +\epsilon_k) \Bigg{]} \subseteq 
\bigcup^{J_k}_{j=1} \Bigg{[} \bigg{(} \bigcap^{j-1}_{i=1}B_{k,i}(\theta_k) \bigg{)} \cap C_{k,j}(\theta_k) \Bigg{]}.
\end{align*}
Take any
$$w=(Z_{k,1},\hdots,Z_{k,J}) \in \bigcup^{J_k}_{j=1}  \Bigg{[} \bigg{(} \bigcap^{j-1}_{i=1}B_{k,i}(\theta_k+\epsilon_k) \bigg{)} \cap C_{k,j}(\theta_k +\epsilon_k) \Bigg{]}.$$
For some $q \in \{1,\hdots, J_k \},$ for which $Z_{k,q} \in C_{k,q} (\theta_k +\epsilon_k)$ and $Z_{k,j} \in B_{k,j} (\theta_k +\epsilon_k)$ for $j=1,\hdots,q-1$. $Z_{k,q} \in C_{k,q} (\theta_k +\epsilon_k)$ implies that $Z_{k,q} \in C_{k,q} (\theta_k)$.
Furthermore $Z_{k,q} \in B_{k,q} (\theta_k +\epsilon_k)$ implies that $Z_{k,q} \in B_{k,q} (\theta_k) \cup C_{k,q} (\theta_k)$ for some $j=1,\hdots, q-1$.
Therefore,
\begin{equation*}
w \in \bigcup^{J_k}_{j=1} \Bigg{[} \bigg{(} \bigcap^{j-1}_{i=1}B_{k,i}(\theta_k) \bigg{)} \cap C_{k,j}(\theta_k) \Bigg{]}.
\end{equation*}

Next suppose for any $m_1, \hdots, m_K$ where $m_1 \in \{1,\hdots, K\} $ and $m_k \in \{1,\hdots, K\} \backslash \{ m_1,\hdots, m_{k-1} \}$ with $\theta_{m_1}, \hdots, \theta_{m_l} \geq \theta'$ and $\theta_{m_{l+1}}, \hdots, \theta_{m_K} < \theta'$. Let $\Theta_l=(\theta_{m_1},\hdots, \theta_{m_l})$. Then
\begin{align*}
P( \text{reject all } H_{0k} \text{ for which }
\theta_k \geq \theta'| \Theta)
& = P(\bar{W}_l (\Theta_l))
\\
& \geq P(\bar{W}_l (\Theta')) 
\\
& \geq P(\bar{W}_k (\Theta')) 
\\
&= P( \text{reject all } H_{0k} \text{ for which }
\theta_k \geq \theta'| H_{PG}).
\end{align*}
where $\Theta'=(\theta',\hdots, \theta').$
\end{proof}

It follows from Theorem \ref{the:TotalPower} that the conjunctive power ($P_{C}$) is minimised when all treatments have the smallest interesting treatment effect. In order to ensure the conjunctive power is greater than level $1-\beta$ we rearrange the events  $B_{k,j}(\theta_k)$ and $C_{k,i}(\theta_k)$ to find
\begin{equation}
P_{C} = P(\bar{W}_l(\Theta'))=\sum_{\substack{j_k=1 \\ k=1,2,\ldots,K}}^{J_k} \Phi(\mathbf{L}^+_{\mathbf{j_k}}(\Theta'),\mathbf{U}^+_{\mathbf{j_k}}(\Theta'),\Sigma_{\mathbf{j_k}}),
\label{Powerequation}
\end{equation}
where $\mathbf{U}^+_{\mathbf{j_k}}(\Theta')=(U^+_{1,j_1}(\theta'), \hdots, U^+_{K,j_K}(\theta'))$ and $\mathbf{L}^+_{\mathbf{j_k}}(\Theta')=(L^+_{1,j_1}(\theta'), \hdots, L^+_{K,j_K}(\theta'))$ with $U^+_{k,j_k}(\theta_k)$ and $L^+_{k,j_k}(\theta_k)$ defined in Equation (\ref{equ:Uplus}) and Equation (\ref{equ:Lplus}),  respectively. The correlation matrix  $\Sigma_{\mathbf{j_k}}$ is the same as that given for FWER in Equation (\ref{equ:FullCor}).  
\par
When one has equal number of stages and equal allocation to find the sample size one needs to increase $n$ until $P_{C}=1-\beta$. If one is in the case of fixed $\mathbf{n(k)}$ then one can use Algorithm \ref{Alg:pairwisepower}, now replacing pairwise power for conjunctive power. 

\subsection{Sample size distribution and Expected sample size}
\label{sec:expsam}
The determination of sample size distribution and expected sample size involves calculating the probability for each outcome of the trial, denoted as $Q_{\mathbf{j_k},\mathbf{q_k}}$. Here, $\mathbf{q_k}=( q_1,\hdots, q_K )$ is defined, where $q_k= 0$ indicates that treatment $k$ falls below the lower stopping boundary at point $j_k$, and $q_k= 1$ indicates that treatment $k$ exceeds the upper stopping boundary at point $j_k$. We find
\begin{align*}
Q_{\mathbf{j_k},\mathbf{q_k}}= & \Phi(\tilde{\mathbf{L}}_{\mathbf{j_k,q_k}}(\Theta),\tilde{\mathbf{U}}_{\mathbf{j_k,q_k}}(\Theta),\Sigma_{\mathbf{j_k}}),
\end{align*}
with $\mathbf{j_k}$ one can define  $\tilde{\mathbf{L}}_{\mathbf{j_k,q_k}}(\Theta)=(\tilde{L}_{1,j_1,q_1}(\theta_1), \hdots, \tilde{L}_{K,j_K,q_K}(\theta_K))$ and $\tilde{\mathbf{U}}(\Theta)_{\mathbf{j_k,q_k}}=(\tilde{U}_{1,j_1,q_1}(\theta_1),$ $ \hdots, \tilde{U}_{K,j_K,q_K}(\theta_K))$ where 
\begin{align*}
\tilde{L}_{k,j,q_k}(\theta_k)=&(l_{k,1}-\frac{\theta_k}{\sqrt{I_{k,1} }},\hdots, l_{k,j-1}-\frac{\theta_k}{\sqrt{I_{k,j-1} }}, [\mathbbm{1}(q_k=0)(-\infty)+ u_{k,j_k}]-\frac{\theta_k}{\sqrt{I_{k,j} }} ),
\\
\tilde{U}_{k,j,q_k}(\theta_k)=&(u_{k,1} -\frac{\theta_k}{\sqrt{I_{k,1} }},\hdots, u_{k,j-1} -\frac{\theta_k}{\sqrt{I_{k,j-1} }}, [\mathbbm{1}(q_k=1)(\infty)+ l_{k,j_k}]-\frac{\theta_k}{\sqrt{I_{k,j} }} ),
\end{align*} 
respectively.
The $P_{\mathbf{j_k},\mathbf{q_k}}$  are associated with their given total sample size $N_{\mathbf{j_k},\mathbf{q_k}}$ for that given $\mathbf{j_k}$ and $\mathbf{q_k}$.
\begin{align*}
N_{\mathbf{j_k},\mathbf{q_k}} = &\bigg{(} \sum^K_{k=1} n_{k,j_k} \bigg{)}
+\max_{k\in 1,\hdots K}(n_{0,k,j_k}),
\end{align*}

This shows that the control treatment continues being recruited to until, at the earliest, the last active treatment to be added has had at least one analysis. To obtain the sample size distribution, as similarly done in \cite{GreenstreetPeter2021Ammp}, we group all the values of $\mathbf{j_k}$ and $\mathbf{q_k}$ that gives the same value of $N_{\mathbf{j_k},\mathbf{q_k}}$ with its corresponding $Q_{\mathbf{j_k},\mathbf{q_k}}$. This set of $Q_{\mathbf{j_k},\mathbf{q_k}}$ is then summed together to give the probability of the realisation of this sample size. To calculate the sample size distribution for each active arm, group $n_{k,j_k}$ with its corresponding $Q_{\mathbf{j_k},\mathbf{q_k}}$ and this can similarly be done for the control treatment. The expected sample size for a given $\Theta$, denoted as $E(N|\Theta)$, is obtained by summing all possible combinations of $\mathbf{j_k}$ and $\mathbf{q_k}$,

\begin{equation}
E(N|\Theta)= \sum_{\substack{j_k=1 \\ k=1,2,\ldots,K}}^{J_k} \sum_{\substack{q_k\in \{ 1,\infty \} \\ k=1,2,\ldots,K}} Q_{\mathbf{j_k},\mathbf{q_k}} N_{\mathbf{j_k},\mathbf{q_k}}.
\label{eq:ESS}
\end{equation}
The expected sample size for multiple different treatment effects ($\Theta=\{\theta_1,\hdots,\theta_K\}$) can then be found using Equation (\ref{eq:ESS}).
\section{Motivating trial example}
\label{sec:Trialexample}
\par
\subsection{Setting}
One example of a platform trial is FLAIR, which focused on chronic lymphocyte leukemia \cite{HowardDenaR2021Apti}. FLAIR initially planned to incorporate an additional active treatment arm and conduct an interim analysis midway through the intended sample size for each treatment. During the actual trial, two extra arms were introduced, including an additional control arm. The original trial design primarily addressed the pairwise type I error due to the inclusion of both additional experimental and control arms.
\par
Following \cite{GreenstreetPeter2021Ammp}, a hypothetical trial that mirrors some aspects of FLAIR will be studied. In this hypothetical trial the familywise error rate (FWER) in the strong sense will be controlled. Controlling the FWER may be seen as crucial in this scenario, as the trial aims to assess various combinations of treatments involving a common compound for all active treatments \citep{WasonJamesMS2014Cfmi}. There is an initial active treatment arm, a control arm, and a planned addition of one more active treatment arm during the trial. We apply the proposed methodology to ensure FWER control and consider the conjunctive power and pairwise power.
\par
The pairwise power is the main focus of the simulation study rather than the disjunctive power, as a potential drawback of disjunctive power is it is highly dependent on the treatment effect of all the treatments in the study, even the ones without a clinically relevant effect. For example assume one treatment has a clinically relevant effect and the rest have effect equal to the control treatment, then the disjunctive power will keep increasing the more treatments that are added, if one keeps the same bounds, even though the probability of taking the correct treatment forward does not increase. Equally the minimum the disjunctive power can be is equal to the pairwise power. This is when only one treatment has a clinically relevant effect and the rest have an extreme negative effect. A further advantage of the pairwise power is it gives the probability of the treatment with the greatest treatment effect being found, assuming that this treatment has effect equal to the clinically relevant effect. 
\par
Considering the planned effect size from FLAIR, we assume an interesting treatment difference of $\theta'=-\log(0.69)$ and a standard deviation of $\sigma=1$. It should be noted that while FLAIR used a time-to-event endpoint with $0.69$ representing the clinically relevant hazard ratio between the experimental and control groups, our hypothetical trial will focus on continuous endpoints using a normal approximation of time-to-event endpoints as discussed in \cite{JakiT2013Coca}. The desired power is $80\%$. We will maintain the same power level as FLAIR while targeting a one-sided FWER of 2.5\%. The active treatment arms interim analysis will be conducted midway through its recruitment and 1:1 allocation will be used between the control and the active treatments as done in FLAIR \citep{HillmenPeter2023Iarv}. 
\par
The difference between a design which controls the pairwise power and the conjunctive power will be studied in Section \ref{subsec:comparingpowers}. Additionally, for both pairwise power and the conjunctive power, the number of patients per arm per stage, the maximum sample size, the expected sample size and the disjunctive power will be studied. In Section \ref{subsec:separatetrials} the effect of different numbers of patients recruited to the control before the second treatment is added ($n(2)$) will be studied with the focus being on expected sample size and maximum sample size of the trial. The designs will be compared to running two completely separate independent trials for each of the 2 active treatments. 
When running two trials there would be less expectation to control the FWER across the two trials. Therefore along with the fair comparison of type I error control of 2.5\% across the multiple separate studies, the setting of having pairwise error rate being controlled for each at 2.5\% will be shown. 
 In Section \ref{subsec:separatetrialstypeIerror} the effect of using a more liberal FWER control compared to type I error control for the separate trials is studied for trials with 3 and 4 active arms.

\subsection{Comparing the two types of power}
\label{subsec:comparingpowers}
We will consider the effect of adding the second treatment halfway through recruitment of the first active treatment, both for ensuring pairwise power and conjunctive power are at 80\%. The binding triangular stopping boundaries will be used \citep{WhiteheadJ.1997TDaA, WasonJamesM.S.2012Odom, li2020optimality}. The stopping boundaries are the same regardless of if one is controlling pairwise power or conjunctive power as $r(2)=r_{1,1}$ for both. The stopping boundaries are given in Table \ref{tab:BoundsSS} and are equal for both designs.
\par
In Table \ref{tab:BoundsSS} the sample size when ensuring that the pairwise power is greater than 80\% is given. Both active treatments will have up to 152 patients recruited to them and the control treatment can have up to 228 patients. This is due to 76 patients already being recruited to the control before the second treatment is added. The maximum sample size for the pairwise power design is therefore $\max(N)=152+152+228=532$. Additionally in Table \ref{tab:BoundsSS} the sample size when ensuring that the conjunctive power is greater than 80\% is given. The maximum sample size now is $\max(N)=192+192+288=672$. The calculations were carried out using R \citep{Rref} with the method given here having the multivariate normal probabilities being calculated using the packages \texttt{mvtnorm} \citep{mvtnorm} and \texttt{gtools} \citep{warnes2021package}. Code is available at https://github.com/pgreenstreet/Platform\_trial\_multiple\_superior.

\begin{table}[H]
\centering
 \caption{ The stopping boundaries and sample size of the proposed designs, for both control of pairwise power and of conjunctive power.}
\resizebox{\textwidth}{!}{\begin{tabular}{c|c|c|c|c|c|c}
Design  & \multirow{2}{*}{ $U = \begin{pmatrix}
U_1 \\
U_2  \\ 
\end{pmatrix}$} & \multirow{2}{*}{$L =
\begin{pmatrix}
L_1 \\
L_2 \\ 
\end{pmatrix}$} & \multirow{2}{*}{$\begin{pmatrix}
n_{1,1}  & n_{1,2} \\
 n_{2,1}  & n_{2,2}   
\end{pmatrix}$ }&  \multirow{2}{*}{$\begin{pmatrix}
n_{0,1,1}  & n_{0,1,2} \\
 n_{0,2,1}  & n_{0,2,2}  
\end{pmatrix}$}   & \multirow{2}{*}{$\begin{pmatrix}
n(1) \\
n(2)
\end{pmatrix}$} & \multirow{2}{*}{$\max(N)$} \\
controlling & & & & &  \\
\hline
Pairwise & \multirow{2}{*}{$\begin{pmatrix}
2.501 & 2.358 \\
2.501 & 2.358  \\ 
\end{pmatrix}$} & \multirow{2}{*}{$\begin{pmatrix}
0.834 & 2.358 \\
0.834 & 2.358 \\ 
\end{pmatrix}$}  & \multirow{2}{*}{$\begin{pmatrix}
 76  & 152  \\
 76  & 152   
\end{pmatrix}$} & \multirow{2}{*}{$\begin{pmatrix}
 76  & 152 \\
 152  & 228
\end{pmatrix}$} & \multirow{2}{*}{$\begin{pmatrix}
 0 \\
 76 
\end{pmatrix}$}  & \multirow{2}{*}{532} \\
power & & & & & &
\\
Conjunctive & \multirow{2}{*}{$\begin{pmatrix}
2.501 & 2.358 \\
2.501 & 2.358  \\ 
\end{pmatrix}$} & \multirow{2}{*}{$\begin{pmatrix}
0.834 & 2.358 \\
0.834 & 2.358 \\ 
\end{pmatrix}$}  & \multirow{2}{*}{$\begin{pmatrix}
 96  & 192  \\
 96  & 192   
\end{pmatrix}$} & \multirow{2}{*}{$\begin{pmatrix}
 96  & 192 \\
 192  & 288
\end{pmatrix}$} & \multirow{2}{*}{$\begin{pmatrix}
 0 \\
 96
\end{pmatrix}$} & \multirow{2}{*}{672} \\  
power & & & & & &

\end{tabular}}
\label{tab:BoundsSS} 
\end{table}

\par
Based on the two designs in Table \ref{tab:BoundsSS}, in Table \ref{tab:resultstri} the conjunctive power, pairwise power and disjunctive power for different values of $\theta_1$ and $\theta_2$ are given along with the expected sample size. The values of $\theta_1$ and $\theta_2$ are chosen to study the effects under the global null, when treatments have a clinically relevant effect and when one of the active treatments performs considerably worst than the rest.
%
Table \ref{tab:resultstri} shows when $\theta_1$ and $\theta_2$  equals the clinically relevant effect $\theta'$ under the design for pairwise power, that the pairwise power of both treatments is 80.0\%; the conjunctive power is 66.0\%; the disjunctive power is 94.1\%; and the expected sample size is 420.6. This highlights the fact that when controlling the pairwise power that if both treatments have a clinically relevant effect there is a large chance (44\%) that one may miss at least one of the two treatments. 
\par
When studying the design in which conjunctive power is controlled one can now see that the pairwise power and disjunctive power is much greater compared to the pairwise power design. This comes with a large increase in both expected and maximum sample size, for example the maximum sample size has increased by 140 patients. 
\par
As seen for the design for conjunctive power section of Table \ref{tab:resultstri} the disjunctive power when treatment 1 and 2 have effect $\theta', 0$, respectively, does not equal the disjunctive power of treatment 1 and 2 when the effect is $0, \theta'$. This is because the outcome of treatment 1's  test statistic has a larger effect on treatment 2 then the other-way round. For example treatment 1 first stage is always independent of treatment 2. However for treatment 2 it's first stage is only independent of treatment 1 if treatment 1 stops after its first stage. Therefore $\Sigma_{(1,2)} \neq \Sigma_{(2,1)}$. However as can be seen this difference in the cases studied is very small.
\par 
In Table \ref{tab:resultstri} shows when there is only one treatment with a clinically relevant effect the conjunctive power equals the pairwise power of that treatment. When neither treatment has a clinically relevant effect the conjunctive power equals 100\%, as there are no treatments with a clinically relevant effect that need to be found.  As a result the trial has already resulted in all the clinically relevant treatments being declare i.e 0 treatments.
\par
The expected sample size is greatly dependent on which treatment has the clinically relevant effect and which does not. For example when studying the the design with pairwise power control the expected sample size when the treatment effect is $\theta', 0$, is 372.7. This is compared to 396.6 when the treatment effect is $0, \theta'$ for treatment 1 and 2 respectively. This difference is because the probability of treatment $k$ stopping after the first stage is higher when $\theta_k=0$ compared to $\theta_k=\theta'$. Therefore when the second treatment has effect 0 it is more likely that the trial will stop after the second stage of the trial. This reduces the amount of patients on average being recruited to the control treatment.
\par
In Table \ref{tab:resultstri} it can be seen that the pairwise power for the treatment with a clinically relevant effect is equal to the disjunctive power when the other treatment has an extremely negative treatment effect compared to the control.  This is as there is no longer a chance that the other treatment can be taken forward. Therefore  $\theta_1=-\infty$ $\theta_2=\theta'$ or $\theta_1=\theta'$ $\theta_2=-\infty$, is the point when the pairwise, disjunctive and conjunctive power are all equal. 
When one treatment has effect $\theta'$ and the other has effect equal to the control the disjunctive power is greater than the pairwise power, as there is still a chance that the other treatment may be taken forward. In Table \ref{tab:resultstri} it is shown that when both treatments have effect $0$ the disjunctive power is equal to the FWER for the trial. In addition when a treatment has effect $0$ this results in the pairwise power for that treatment equalling the PWER.
\begin{table}[]
\centering
 \caption{Operating characteristics of the proposed designs under different values of $\theta_1$ and $\theta_2$, for both control of pairwise power and of conjunctive power.}

\resizebox{\textwidth}{!}{\begin{tabular}{c|c|c|c|c|c|c}

\multicolumn{7}{c}{\textbf{Design for pairwise power}}
\\
\hline
\multicolumn{2}{c|}{Treatment effect} & \multicolumn{2}{c|}{Pairwise power} & Conjunctive power & Disjunctive power & Expected sample size
\\
$\theta_1$ & $\theta_2$ & $P_{PW,1}$ &  $P_{PW,2}$ &  $P_{C}$   & $P_{D}$  & $E(N|\theta_1,\theta_2)$  \\
\hline
 $\theta'$ & $\theta'$ & 0.800  & 0.800 & 0.660 & 0.941 & 420.6   \\
 $\theta'$ & $0$ & 0.800  & 0.013 & 0.800 & 0.802  & 372.7   \\
 $\theta'$ & $-\infty$ & 0.800  & 0 & 0.800 & 0.800  & 342.9   \\
  0 & $\theta'$ & 0.013  & 0.800 & 0.800 & 0.802 &  396.6   \\
  0 & $0$ & 0.013  & 0.013 & 1 & 0.025 &  348.7   \\
 $-\infty$ & $\theta'$ & 0 & 0.800 & 0.800 & 0.800 &  381.7   \\
\hline

\multicolumn{7}{c}{\textbf{Design for conjunctive power}}
\\
\hline
\multicolumn{2}{c|}{Treatment effect} & \multicolumn{2}{c|}{Pairwise power} & Conjunctive power & Disjunctive power & Expected sample size
\\
$\theta_1$ & $\theta_2$ & $P_{PW,1}$ & $P_{PW,2}$ &  $P_{C}$   & $P_{D}$ & $E(N|\theta_1,\theta_2)$  \\
\hline
 $\theta'$ & $\theta'$ & 0.890  & 0.890 & 0.801 & 0.979 &  508.1   \\
 $\theta'$ & $0$ & 0.890  & 0.013 & 0.890 & 0.890  & 463.0   \\
 $\theta'$ & $-\infty$ & 0.890  & 0 & 0.890 & 0.890  & 425.4   \\
  0 & $\theta'$ & 0.013  & 0.890 & 0.890 & 0.891  & 485.6   \\
  0 & $0$ & 0.013  & 0.013 & 1 & 0.025 &  440.5   \\
 $-\infty$ & $\theta'$ & 0 & 0.890 & 0.890 & 0.890  & 466.7  \\
\end{tabular}}
\label{tab:resultstri} 
\end{table}

In the Supporting Information Section 3 results for using both O'Brien and Fleming \citep{OBrienPeterC.1979AMTP} and Pocock boundaries \citep{PocockStuartJ.1977GSMi} are shown, with the futility boundary equal to 0 \citep{MagirrD.2012AgDt}. Additionally the results for using non binding triangular stopping boundaries are shown in the Supporting Information Section 4.
Overall Table \ref{tab:BoundsSS} and Table \ref{tab:resultstri} have shown that the choice of type of power to control may be highly dependent on the sample size available, as if the design ensures conjunctive power of level $1-\beta$ it will ensures pairwise power of at least $1-\beta$ but the opposite does not hold. However the sample size for a trial designed for pairwise power will be less than that of a design for conjunctive power.    

\subsection{Comparison with running separate trials}
\label{subsec:separatetrials}
This section studies the effect on maximum and expected sample size depending on when the additional treatment arm is added to the platform trial. The examples for both conjunctive power and pairwise power are compared to running two separate trials. There are two settings for separate trials which are considered. Setting 1 is when the type I error across both the trials is set to be 2.5\%, therefore, the type I error for each is $1-\sqrt{1-0.025}=1.26\%$. For Setting 2 the type I error of each trial is controlled at 2.5\%. For the separate trials which are compared to the pairwise power, the power level for each is set to 80\%. This results in the following sample size and stopping boundaries for the two trials for Setting 1, 
\begin{equation*}
U_1=
\begin{pmatrix}
2.508 & 2.364 \\ 
\end{pmatrix}, 
\:\:\:\:
L_1 =
\begin{pmatrix}
0.836 & 2.364 \\
\end{pmatrix}.
\:\:\:\:
\begin{pmatrix}
n_{1,1} & n_{1,2} \\
\end{pmatrix}=\begin{pmatrix}
77 & 154 \\
\end{pmatrix}.
\end{equation*}  
with $n_{0,1,1}=n_{1,1}$, $n_{0,1,2}=n_{1,2}$ and $n(1)=0$. Setting 2 gives:
\begin{equation*}
U_1=
\begin{pmatrix}
2.222 & 2.095 \\ 
\end{pmatrix}, 
\:\:\:\:
L_1 =
\begin{pmatrix}
0.741 & 2.095 \\
\end{pmatrix}.
\:\:\:\:
\begin{pmatrix}
n_{1,1} & n_{1,2} \\
\end{pmatrix}=\begin{pmatrix}
65 & 130 \\
\end{pmatrix}.
\end{equation*}  
with $n_{0,1,1}=n_{1,1}$, $n_{0,1,2}=n_{1,2}$ and $n(1)=0$. For comparison with the conjunctive power designs the probability of finding both treatments across the two trials is set to 80\%. The required power for each trial is therefore $\sqrt{1-\beta}=0.894$. The boundaries remain the same for both settings  as the type I error remains the same. The new sample size for Setting 1 is $\begin{pmatrix}
n_{1,1} & n_{1,2} \\
\end{pmatrix}=\begin{pmatrix}
98 & 196 \\
\end{pmatrix}$  and for Setting 2 is $\begin{pmatrix}
n_{1,1} & n_{1,2} \\
\end{pmatrix}=\begin{pmatrix}
85 & 170 \\
\end{pmatrix}$.

Figure \ref{fig:PWplot} gives the maximum sample size and the expected sample size under different $\theta_1, \theta_2$ depending on when the second treatment is added, for the pairwise power control of 80\%. 
 Figure \ref{fig:TPplot} gives similar results however the focus now is on control of the conjunctive power at 80\%.

As indicated in Figure \ref{fig:PWplot}, when controlling the pairwise power, if the second active treatment is introduced at the beginning of the trial, the total sample size required is $456$, whereas if it is added at the end of recruitment for treatment 1, the total sample size becomes $616$. This increase in sample size is attributable to two factors. Firstly, there is a necessity to increase the number of patients recruited to the control group until treatment 2 has completed the trial. Secondly, the decrease in correlation between the two treatments results in an enlargement of the boundaries to maintain control over the family-wise error rate. It is this secondary factor which causes the small jumps in maximum sample size seen in Figures \ref{fig:PWplot} and \ref{fig:TPplot}. 
\par
In Figure \ref{fig:PWplot} when comparing the platform designs with pairwise power control, to running two separate trials it can be seen that, for the case that the pairwise error for each trial is 2.5\%, once the second treatment is added after 64 patients have been recruited to the control ($n(2)\geq 64$)  the maximum sample size of running the platform design is greater than or equal to that of running two separate trials, which is $520$ patients. However when controlling the error across both separate trials the maximum sample size is now the same as when adding the second treatment at the end of recruitment for the first treatment in the platform design so $616$. For Setting 1 it can be seen that the expected sample size for separate trials can be better than that of the platform design. In the case of $\theta_1=-\infty$ and $\theta_2=-\theta'$ then once $n(2)\geq 81$  the expected sample size of running the platform design is greater than that of running two separate trials. This is because in the platform approach the control cannot stop until each treatment has finished testing, whereas in the separate trial case each control group will stop as soon as either treatment is dropped. For Setting 1 there are some cases studied which cannot be seen in Figure \ref{fig:PWplot}. These are $\theta_1=\theta'$, $\theta_2=\theta'$ and if $\theta_1=\theta'$, $\theta_2=0$ as both these are at the point $n(2)\geq 117$ which matches that of $\theta_1=\theta'$, $\theta_2=-\infty$. 
When studying the expected sample size of the Setting 2 compared to the platform designs it can be seen that if $\theta_1=-\infty$ and $\theta_2=-\theta'$ then once $n(2)\geq 15$  the expected sample size of running the platform design is greater than that of running two separate trials. The expected sample size for two separate trials when $\theta_1=-\infty$ and $\theta_2=\theta'$ is $319.5$.
\par
When controlling the conjunctive power, as in Figure \ref{fig:TPplot}, if the second active treatment is introduced at the beginning of the trial, the total sample size required is $558$, whereas if it is added at the end of recruitment for treatment 1, the total sample size becomes $784$. Once again the maximum sample size for Setting 1 equals that of when treatment 2 is added after treatment 1 finished recruitment so $784$ patients.
In Figure \ref{fig:TPplot}, when $n(2)\geq 104$  the maximum sample size of running the platform design is greater than or equal to that of running two separate trials under Setting 2, which is $680$ patients. Similar as seen in Figure \ref{fig:PWplot} there is some lines which overlap for Setting 1 in Figure \ref{fig:TPplot} as $n(2)= 143$ is the point for both $\theta_1=\theta'$, $\theta_2=\theta'$ and $\theta_1=\theta'$, $\theta_2=-\infty$, also $n(2)= 121$ is the point for both $\theta_1=0$, $\theta_2=\theta'$ and $\theta_1=0$, $\theta_2=0$. When $n(2)\geq 104$ for Setting 1, and $n(2)\geq 39$ for Setting 2, the expected sample size of running the platform design is greater than that of running two separate trials when $\theta_1=-\infty$ and $\theta_2=\theta'$. The expected sample size for running two separate trials when $\theta_1=-\infty$ and $\theta_2=\theta'$ is $475.3$ and $403.8$ for Setting 1 and Setting 2 respectively.

\begin{figure}[]
  \centering
  \includegraphics[width=.49\linewidth,trim= 0 0.5cm 0 0cm, clip]{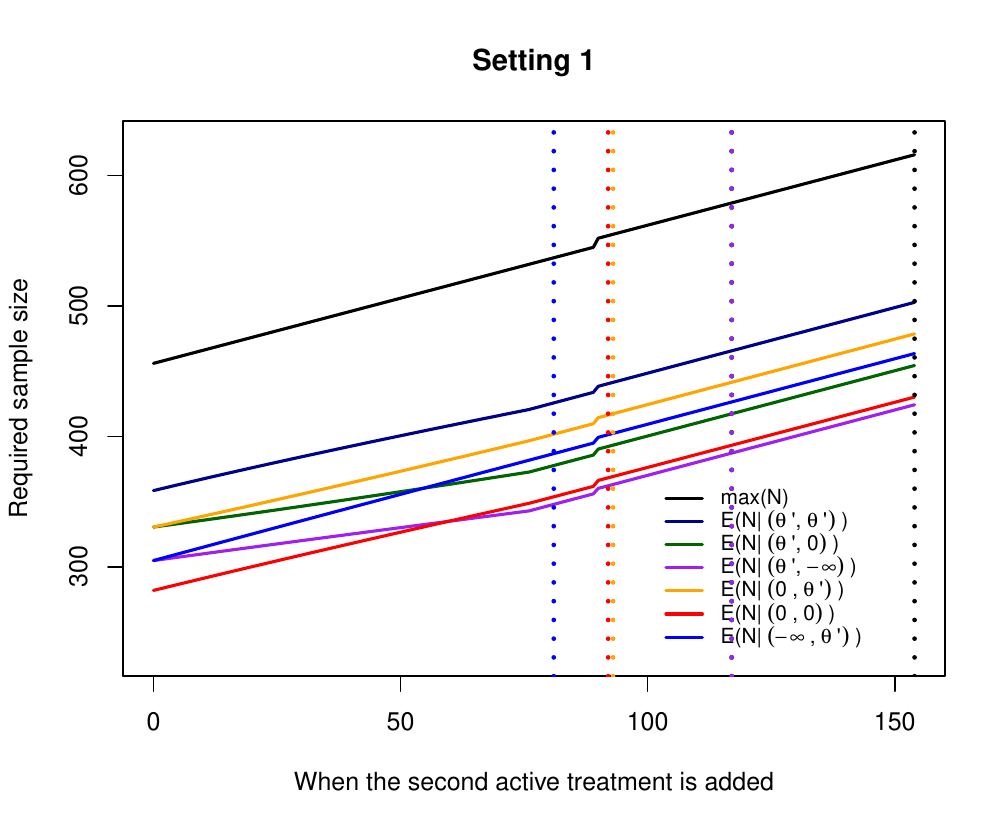}  
  \centering
  \includegraphics[width=.49\linewidth,trim= 0 0.5cm 0 0cm, clip]{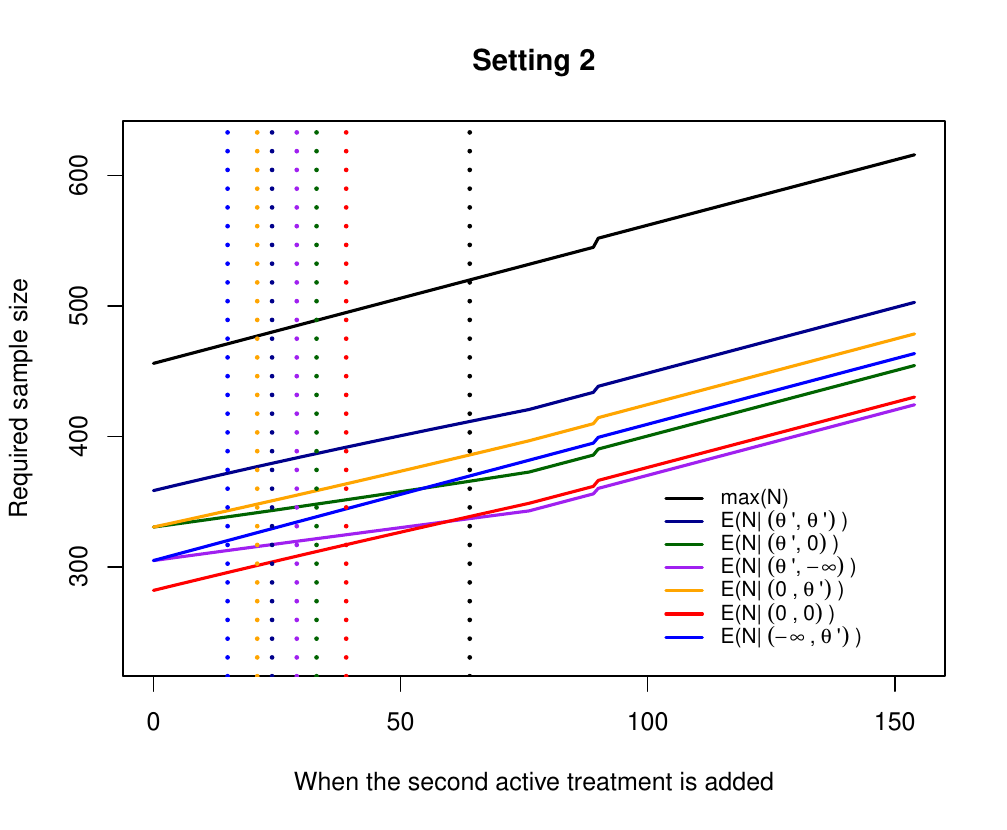}  
\caption{Both panels give the maximum sample size and the expected sample size under different $\theta_1, \theta_2$ depending on the value $n(2)$, for the pairwise power control of 80\%. Left panel: dash vertical lines correspond to the points where the maximum/expected sample size of the trial is now greater than running two separate trials with type I error control across both trials set to 2.5\%. Right panel: dash vertical lines correspond to the points where the maximum/expected sample size of the trial is now greater than running two separate trials with type I error control for each trial set to 2.5\%.}
\label{fig:PWplot}
\end{figure}

\begin{figure}[]
  \centering
  \includegraphics[width=.49\linewidth,trim= 0 0.5cm 0 0cm, clip]{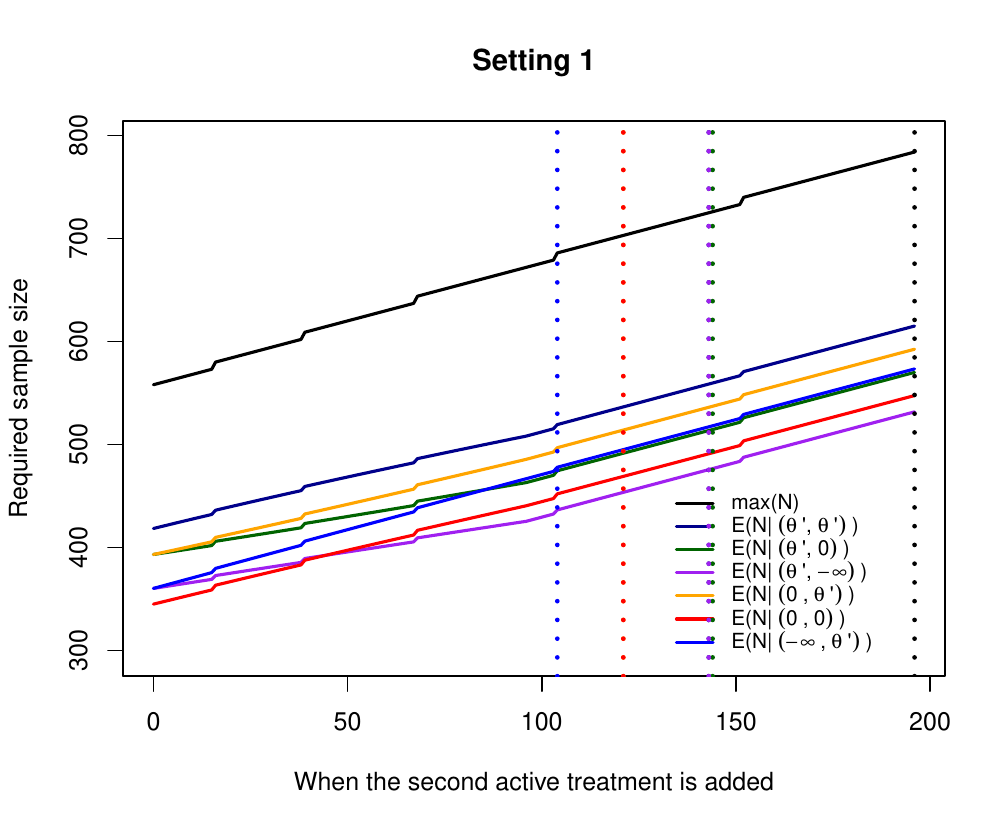}  
  \centering
  \includegraphics[width=.49\linewidth,trim= 0 0.5cm 0 0cm, clip]{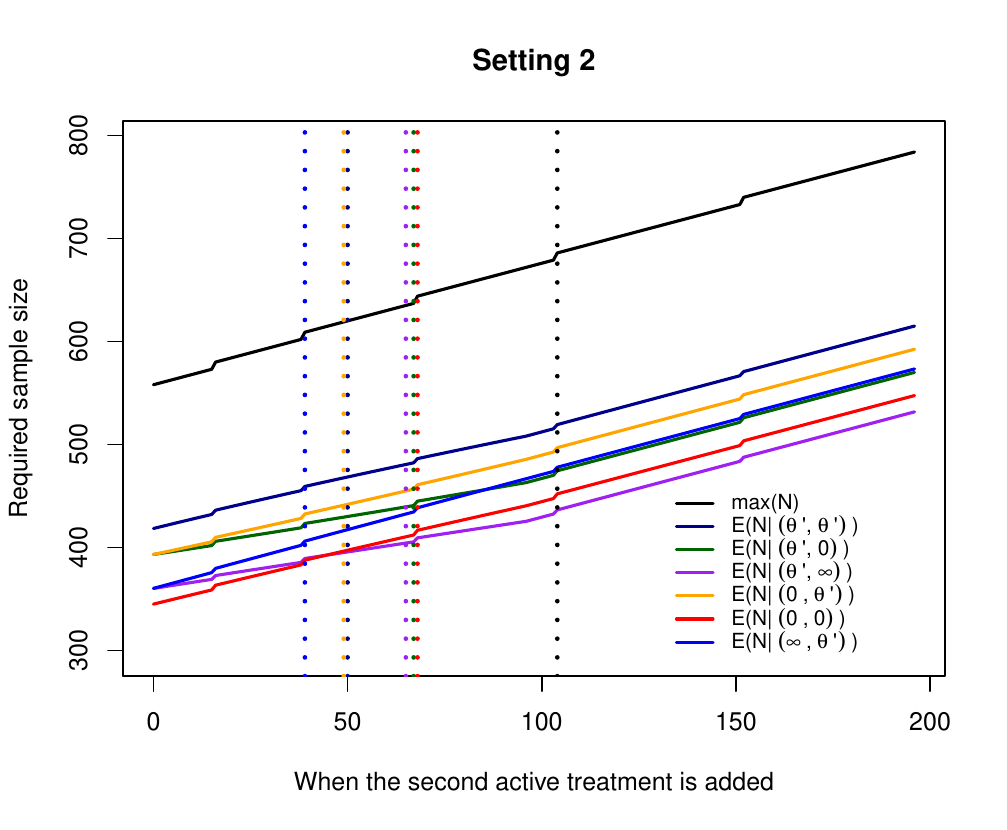}  
\caption{The maximum sample size and the expected sample size under different $\theta_1, \theta_2$ depending on the value $n(2)$, for the conjunctive power control of 80\%. Left panel: dash vertical lines correspond to the points where the maximum/expected sample size of the trial is now greater than running two separate trials under Setting 1. Right panel: dash vertical lines correspond to the points where the maximum/expected sample size of the trial is now greater than running two separate trials under Setting 2.}
\label{fig:TPplot}
\end{figure}
Overall Figures \ref{fig:PWplot} and \ref{fig:TPplot} have shown  
there maybe times that there is no benefit to running a platform trial with regards to sample size, depending on when the later treatment is added to the trial. This issue is further emphasised when there is not the expectation to control the type I error across all the individual trials as seen in Setting 2.
\subsection{Comparison with running separate trials under different controls of type I error}
\label{subsec:separatetrialstypeIerror}
When designing a multi-arm trial one may find that the expected control of the FWER is less than that of the type I error control for an individual trial, as seen in the TAILoR trial for example \citep{PushpakomSudeepP2015TaIR, PushpakomSudeep2020TTaI}. Therefore in Table \ref{tab:FWERvsPWER} we consider the effect of allowing FWER control of 5\% one sided compared to 2.5\% type I error for the individual trials. In this table the same design parameters where used as above, however, now the number of active arms has increased in the hypothetical trial to 3 or 4, and the number of stages is now either 1,2 or 3. In Table \ref{tab:FWERvsPWER} the focus is on controlling the power at the desired 80\% level with the pairwise power being the focus for the top half and conjunctive power for the bottom half. When controlling the conjunctive power the power for each separate trial is $(1-\beta)^{1/k}$.
In these hypothetical trials it is assumed that each one of the arms is added sequentially, with an equal gap between each one. Therefore in the 3 active arm case if the second arm is added after 20 patients have been recruited to the control then the third arm will be added after a total of 40 patients have been recruited to the control.
\par
In Table \ref{tab:FWERvsPWER} the first 2 columns give the number of active arms and stages for the platform trial, respectively. The third and forth columns then gives the sample size per stage and the maximum sample size of the individual trials, respectively. This has been chosen as this number will remain constant throughout, as it is unaffected by the timing of when the next arm is ready due to each trial being completely separate from the others. The remaining columns give when there is no benefit with regards to the maximum and expected sample size of conducting a platform trial compared to running separate trials, with respect $n(k)-n(k-1)$. The value of $n(k)-n(k-1)=n(2)$ as the first treatment is added at the beginning of the trial. In the Supporting Information Section 5 the plots for the 2 stage and 3 stage example trials as given in Table \ref{tab:FWERvsPWER} are shown. 
\par
Using Table \ref{tab:FWERvsPWER}, for the 3 active arm, 2 stage example each separate trial has $n_{1,1}=65$ and $n_{1,2}=130$. The total maximum sample size of running these 3 separate trials is therefore 780. Once the second treatment is planned to be added after 105 patients recruited to the control, (therefore 210 recruited to the control before treatment 3), there is no benefit in using the platform design with respect to maximum sample size. For the expected sample size four different configurations of the treatment effects are studied. The first ($\Theta_1$) assumes all the treatments have the clinically relevant effect, so $\theta_k=\theta'$ for $k=1,\hdots,K$. The second ($\Theta_2$) assumes only the first treatment has a clinically relevant effect and the rest have effect equal to that of the control treatment, so $\theta_1=\theta'$, $\theta_k=0$ for $k=2,\hdots,K$. The third ($\Theta_3$) assumes only the last treatment has a clinically relevant effect and the rest equal the control, so $\theta_K=\theta'$, $\theta_k=0$ for $k=1,\hdots,K-1$. The forth configuration ($\Theta_4$) assumes all the treatments have effect equal to that of the control treatment, so the global null, so $\theta_k=0$ for $k=1,\hdots,K$. For the expected sample size for the 4 treatment effect configurations studied here there is no benefit in using a platform trial after potentially just 62 patients if $\Theta_3$ is true, this does rise to 73 if $\Theta_1$ is true, if the focus is on sample size. 
\par 
Table \ref{tab:FWERvsPWER} shows that the maximum sample size of running separate trials increases with increase in number of stages or arms. This is also the case when running the proposed platform trial design. As can be seen with respect to maximum sample size the more stages the trial has the later a treatment can be added before the maximum sample size becomes worst than running separate trials. For example, when pairwise power is controlled, a 1 stage 3 arm trial with regards to maximum sample size one should use separate trials after 90 patients this is compared to 114 patients for a 3 arm 3 stage trial. 
\par
If the focus is on the expected sample size, than for the examples studied here, increasing the number of stages results in a decrease time before one would switch to separate trials. For example when controlling the conjunctive power, for the 4 arm trial, it can be seen that the expected sample size under the global null for running separate trials becomes less than that of running the platform trials when $n(2)=140$ for 1 stage case compared to $n(2)=99$ for the 3 stage version. This is because the ability to have interim analyses saves more patients for separate trials with respect to expected sample size. This is because in separate trials when a treatment is stopped earlier either for futility or superiority the control treatment also stops. Therefore in this 4 arm example there are 4 sets of control treatments which can stop early compared to only 1 set for the platform design. Additionally for the platform trial the control can only stop once all the active treatments have stopped. This is why the expected sample size under $\Theta_2$ is less then that of $\Theta_3$, as if the final treatment has a clinically relevant effect then it will on average have more stages than a treatment with effect equal to that of the control for the configuration studied here. 
\par
This section has therefore shown that there are periods in which using a platform trial can be beneficial with regards to sample size if one can use a more liberal type I error control compared to that used for individual trials. However this has also shown that if treatments are added late into the trial there may not be benefit, so highlighting the importance of considering which trial design should be use.
\par
\begin{table}[]
\centering
 \caption{The comparison of using the proposed platform design with FWER of 5\% one sided against running separate trials with type I error control of each at 2.5\% one sided, for different numbers of arms and stages.  
 }

\resizebox{\textwidth}{!}{\begin{tabular}{c|c|c|c|c|c|c|c|c}

\multicolumn{8}{c}{\textbf{Design for pairwise power}}
\\
\hline
Active arms & Stages & \multicolumn{2}{c}{Separate trial} \vline & $\min_{n(2)}(\max(N_s)$ & \multicolumn{4}{c}{$\min_{n(2)}(E(N_s | \Theta) \leq E(N | \Theta))$}  \\

$K$ & $J$ & $(n_{1,1} ,\hdots, n_{1,J})$ & $\max(N_s)$  & $\leq \max(N))$ &  $\Theta_1$ &  $\Theta_2$    & $\Theta_3$  & $\Theta_4$   \\
\hline

 3 & 1 & 115  & 690 & 90 & 90 & 90 & 90 & 90   \\
 3 & 2 & (65, 130)  & 780 & 105 & 73 & 72 & 62 & 66  \\
 3 & 3 & (46,  92, 138)  & 828 & 114 & 68 & 67 & 55 & 60   \\
  4 & 1 & 115  & 920 & 79 & 79 & 79 & 79 & 79   \\
  4 & 2 & (65, 130)  & 1040 & 94 & 61 & 62 & 54 & 59    \\
 4 & 3 & (46,  92, 138) & 1104 & 103 & 59 & 58 & 49 & 55   \\
\hline

\multicolumn{8}{c}{\textbf{Design for conjunctive power}}
\\
\hline
Active arms & Stages & \multicolumn{2}{c}{Separate trial} \vline & $\min_{n(2)}(\max(N_s)$ & \multicolumn{4}{c}{$\min_{n(2)}(E(N_s | \Theta) \leq E(N | \Theta))$}  \\

$K$ & $J$ & $(n_{1,1} ,\hdots, n_{1,J})$ & $\max(N_s)$  & $\leq \max(N))$ &  $\Theta_1$ &  $\Theta_2$    & $\Theta_3$  & $\Theta_4$   \\
\hline
 
 3 & 1 & (171)  & 1026 & 143 & 143 & 143 & 143 & 143   \\
 3 & 2 & (97, 194)  & 1164 & 166 & 107 & 109 & 101 & 106   \\
 3 & 3 & (68, 136, 204)  & 1224 & 174 & 98 & 99 & 92 & 98   \\
  4 & 1 & (185)  & 1480 & 140 & 140 & 140 & 140 & 140   \\
  4 & 2 & (105, 210)  & 1680 & 167 & 102 & 109 & 103 & 109   \\
 4 & 3 & (74, 148, 222) & 1776 & 182 & 93 & 99 & 93 & 99    \\
\end{tabular}}
Key: $N_s$ is the sample size of running K separate trials, $\Theta_1$: $\theta_k=\theta'$ for $k=1,\hdots,K$; $\Theta_2$: $\theta_1=\theta'$, $\theta_k=0$ for $k=2,\hdots,K$ ; $\Theta_3$: $\theta_K=\theta'$, $\theta_k=0$ for $k=1,\hdots,K-1$; $\Theta_4$: $\theta_k=0$ for $k=1,\hdots,K$.
\label{tab:FWERvsPWER} 
\end{table}

\section{Discussion}
\label{sec:Discussion}
%

This paper has built on the work of \cite{GreenstreetPeter2021Ammp} to show how one can control the FWER for a trial in which the treatments can be preplanned to be added at any point. This work has then studied the different approaches for powering the trial in which the trial will continue even if a superior treatment is found. This paper shows how the expected sample size and sample size distribution can be found. Finally a hypothetical trial, motivated by FLAIR \citep{HowardDenaR2021Apti} is discussed. This section evaluates the pairwise and conjunctive power when the second active treatment is added halfway through recruitment for the first active treatment. We investigate the operating characteristics for multiple values of $\theta_1$ and $\theta_2$. Then the section goes onto study the effect of adding the later treatments at different points in the platform design and compares these trial designs to running separate trials.
\par
The designs flexibility to incorporate the addition of treatments at any point during a trial allows for the creation of multiple designs that depend on when the treatments are introduced. This approach works effectively until the completion of the initial stage for the treatment that initiated the trial. Up to this point, the treatments can be added when it becomes available, and the boundaries can be set accordingly. However, if the treatments are not ready until after the first analysis, two options can be pursued to avoid bias resulting from knowledge of the first stage results. Firstly, one can choose not to plan for the addition of the treatments and conduct separate trials. As demonstrated in Section \ref{sec:Trialexample}, this approach may require fewer patients overall. Alternatively, one can predefine the times at which the treatments will be added and utilize the corresponding bounds. A drawback here is that if the treatments are not ready by the predefined points, they cannot be added. Nevertheless, for the remaining treatments, the control of family-wise error rate will be maintained.  Due to the bounds being designed to
control FWER across all the hypotheses, therefore, by not adding a treatment and so removing a hypothesis this reduces
the  maximum value of the FWER.
\par
This paper has highlighted a potential issue of increased expected and maximum sample size when requiring strong control of family-wise error rate for a platform trial in which an arm is added later. If one would run two completely separate trials the FWER across the trials would likely not be expected. As a result there is a lot of time where there is no benefit to the platform trial design with regards to maximum or expected sample size as was shown Figure \ref{fig:PWplot} and Figure \ref{fig:TPplot} for Setting 2. This point has been further emphasised in Table \ref{tab:FWERvsPWER} which shows that even with a more liberal FWER control compared to the type I error control of each individual trial there are still many points where one may be better of running separate trials with respect to sample size. This work therefore reiterates the importance of the discussions around type I error control in platform trials \citep{MolloySleF.2022Maip, WasonJamesMS2014Cfmi, WasonJames2016Srfm, HowardDenaR2018Romt, ProschanMichaelA2000PGfM, PROSCHANMA1995MCWC, NguyenQuynh2023PTtI}.
\par
If one instead wants to control the pairwise error, as done for example in STAMPEDE \citep{SydesMatthewR2009Iiam}, one can use Equation (\ref{PWER}), now replacing $\theta'$ with 0. An additional advantage of using the PWER, if controlling the pairwise power, is that the stopping boundaries and the sample size required for each active arm are independent of when the arm is added. Therefore the only change will be how many patients need to be recruited to the control. However one may be find PWER in a platform trial insufficient for error control \citep{WasonJamesMS2014Cfmi, MolloySleF.2022Maip} and may not meet the regulators requirements. 



\par
Building upon this research, a study could be conducted to investigate the impact of having different numbers of stages and stopping boundaries while maintaining equal power and type I error for each treatment, utilizing the approach described in Section \ref{sec:Setting}. However, such an investigation would likely require multiple changes in the allocation ratio, resulting in potential issues with time trends. One could therefore examine methods to handle these  time trends, as explored in \cite{lee2020including, MarschnerIanC2022Aoap, RoigMartaBofill2023Oasi, GreenstreetPeter2021Ammp}. Furthermore a change in allocation ratio between treatments can result in different PWER and pairwise power for each treatment if using the same boundaries for each treatment therefore one could use an interative approach such as that discussed in \cite{GreenstreetPeter2021Ammp}. 
Equally one could study the effect of using non-concurrent controls, but once again this can face a large issue with time trends. The main issue with these time trends is that they are unknown. However one could look into incorporating approaches to reduce the bias potentially caused \citep{lee2020including, MarschnerIanC2022Aoap,WangChenguang2022ABmw,SavilleBenjaminR2022TBTM}.
\par
In Section \ref{subsec:separatetrialstypeIerror} it was assumed for the multi-arm trials that each treatment was added after an equal number of control treatments were recruited so $n(k)-n(k-1)=n(2)$ for $k=2,\hdots,K$. This may however not be the case. One may therefore wish to consider the effect of having multiple treatments beginning the study and then adding additional treatments later. 
The methodology presented in Section \ref{sec:Setting} allows for these changes. However when it comes to the comparison designs there are now multiple options that can be chosen. As done in Section \ref{subsec:separatetrialstypeIerror} one could use separate trials for each comparison, however one could consider using multiple MAMS trials where all treatments begin at once, or a mix of the two. Further points to be considered here is how one can evenly share the power across all these trial types, especially if the focus is on conjunctive power, and also how the type I error should be defined for each comparison trial.
\par
Furthermore, this work could be expanded to incorporate adaptive boundaries that adjust once a treatment is deemed effective, as discussed in \cite{UrachS.2016Mgsd} for the case of multi-arm multi-stage (MAMS) trials. However, such an adaptation would result in a less pre-planned design so  potential further complications in understanding for the clinicians, the trial statisticians and the patients. Additionally, determining the point at which the conjunctive power is at its lowest may no longer be feasible, as dropping each arm would lead to lower bounds for the remaining treatments, thus affecting the conjunctive power assessment. This adaptive approach will likely result in uneven distribution of errors across the treatments added at different points. If one was to then adjust for this one may encounter issues with time trends as the allocation ratio may need to change mid trial.
\par
This paper has given a general formulation for designing a preplanned platform trial with a normal continuous endpoint, and using the work of \cite{JakiT2013Coca} one could apply this methodology to other endpoint such as time-to-event used in FLAIR \citep{HowardDenaR2021Apti}. When using this approach one should be aware of computational issues from calculating high dimensional multivariate normal distributions, if one has a large number of arms and stages in the trial design. If this is an issue then one can restrict to only adding arms at the interims so one can utilise the method of \cite{DunnettCharlesW1955AMCP} as discussed in \cite{MagirrD.2012AgDt, GreenstreetPeter2021Ammp}. 

\subsection*{Acknowledgements}
This report is independent research supported by the National Institute for Health Research (NIHR300576). The views expressed in this publication are those of the authors and not necessarily those of the NHS, the National Institute for Health Research or the Department of Health and Social Care (DHSC). TJ and PM also received funding from UK Medical Research Council (MC\_UU\_00002/14 and MC\_UU\_00002/19, respectively). 
This paper is based on work completed while PG was part of the EPSRC funded STOR-i centre for doctoral training (EP/S022252/1).  For the purpose of open access, the author has applied a Creative Commons Attribution (CC BY) licence to
any Author Accepted Manuscript version arising.

\subsection*{Conflict of Interest}
The authors declare no potential conflict of interests. Alun Bedding is a shareholder of Roche Products Ltd.

\bibliography{refp3}

\newpage

\beginsupplement
\begin{center}
\textbf{\Large Supporting Information: A preplanned multi-stage platform trial for discovering multiple superior treatments with control of FWER and power}
\end{center}
\section{Proof of FWER}
\label{SI:proofFWER}

\par
 As in \cite{MagirrD.2012AgDt} we define for any vector of constants $\Theta=(\theta_1, \hdots, \theta_K)$ and $k=1,\hdots,K,$  $j=1,\hdots, J_k$,  then define the events,
\begin{align*}
A_{k,j}(\theta_k)=&[Z_{k,j}<  l_{k,j} + (\mu_k-\mu_0 -\theta_k)I^{1/2}_{k,j}],
\\
B_{k,j}(\theta_k)=&[ l_{k,j}  + (\mu_k-\mu_0 -\theta_k)I^{1/2}_{k,j} < Z_{k,j}  <  u_{k,j} + (\mu_k-\mu_0 -\theta_k)I^{1/2}_{k,j}].
\end{align*}
The FWER under the equal to
\begin{equation}
1-P(\bar{R}_K(\Theta))
 = 1-P(\bigcap_{k \in \{m_1, \hdots, m_K \} } \Bigg{(} \bigcup^{J_k}_{j=1} \Bigg{[} \bigg{(} \bigcap^{j-1}_{i=1}B_{k,i}(\Theta) \bigg{)} \cap A_{k,j}(\Theta) \Bigg{]} \Bigg{)})
\label{eq:1mpr}
\end{equation}
where if $\mu_k-\mu_0 = \theta_k$ for $k = 1, \hdots, K$, the event that $H_{01},\hdots,H_{0K}$ all fail to be rejected is equal to $\bar{R}_K(\Theta)$. The convention that $\bigcap^{0}_{i=1}= \Omega$ where $\Omega$ is the whole sample space is used and $m_1 \in \{1,\hdots, K\} $ and $m_k \in \{1,\hdots, K\} \backslash \{ m_1,\hdots, m_{k-1} \}$. Therefore $\{m_1,\hdots, m_K\}=\{1,\hdots, K\}$. This notation reflects the fact that the order in which treatments are added affects the FWER as seen in \cite{GreenstreetPeter2021Ammp}.

\begin{theorem}
For any $\Theta$, under the conditions above, $P( \text{reject at least one true } H_{0k} | \Theta) \leq$   \newline
 $P(\text{reject at least one true } H_{0k} | H_{G})$.
\label{the:FWER}
\end{theorem}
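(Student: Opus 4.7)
My plan is to adapt the event-containment argument used for Theorem~\ref{the:TotalPower} (conjunctive power), now applied to rejection rather than acceptance events. The key observation is that the standardised statistics $\tilde Z_{k,j} = Z_{k,j} - (\mu_k - \mu_0)I^{1/2}_{k,j}$ form a standard multivariate normal vector with the covariance $\Sigma$ from (\ref{equ:FullCor}), irrespective of the true configuration. Under the convention $\mu_k - \mu_0 = \theta_k$ (so that the events $A_{k,j}, B_{k,j}, C_{k,j}$ reduce to futility, continuation, and rejection at stage $j$), the event that $H_{0k}$ is rejected can be written as
\[
R_k(\theta_k) \;=\; \bigcup_{j=1}^{J_k}\Bigl[\bigcap_{i=1}^{j-1} B_{k,i}(\theta_k)\,\cap\, C_{k,j}(\theta_k)\Bigr],
\]
a region in $\tilde Z$-space depending only on $\theta_k$. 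Under $H_G$ the rejection event for $H_{0k}$ is $R_k(0)$; under $\Theta$, a true null $H_{0k}$ (one with $\theta_k\leq 0$) is rejected on $R_k(\theta_k)$.

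The core lemma to establish is $R_k(\theta_k)\subseteq R_k(0)$ whenever $\theta_k \leq 0$. Fix $\tilde z\in R_k(\theta_k)$ with first rejection stage $q$. At stage $q$, $\theta_k\leq 0$ gives $u_{k,q}-\theta_k I^{1/2}_{k,q}\geq u_{k,q}$, hence $\tilde z_{k,q}>u_{k,q}$, i.e., $\tilde z\in C_{k,q}(0)$. For $i<q$ we have $\tilde z\in B_{k,i}(\theta_k)$, which yields $\tilde z_{k,i}>l_{k,i}-\theta_k I^{1/2}_{k,i}\geq l_{k,i}$ but does not by itself force $\tilde z_{k,i}<u_{k,i}$ --- the upper continuation threshold has also been pushed up. The resolution is the inclusion $B_{k,i}(\theta_k)\subseteq B_{k,i}(0)\cup C_{k,i}(0)$: either the trajectory continues past stage $i$ under $H_G$ as well, or it has already crossed $u_{k,i}$ and is therefore rejected. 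Letting $q'\leq q$ be the earliest such crossing index places $\tilde z$ in $R_k(0)$ in either case.

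Granted the lemma, the conclusion is immediate. Let $T(\Theta) = \{k:\theta_k\leq 0\}$ index the true nulls; then
\[
\bigcup_{k\in T(\Theta)} R_k(\theta_k)\;\subseteq\;\bigcup_{k\in T(\Theta)} R_k(0)\;\subseteq\;\bigcup_{k=1}^{K} R_k(0),
\]
and taking probabilities under the common standard-normal law of $(\tilde Z_{k,j})$ yields
\[
P(\text{reject at least one true } H_{0k}\mid\Theta) \;\leq\; P(\text{reject at least one } H_{0k}\mid H_G),
\]
which is the claim, since every $H_{0k}$ is true under $H_G$. The main technical obstacle is the non-containment $B_{k,i}(\theta_k)\not\subseteq B_{k,i}(0)$ flagged above: the continuation window is shifted, not merely shrunk. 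The observation that rescues the argument --- that an excursion past the upper boundary is itself a rejection --- is the exact dual of the step $B_{k,i}(\theta_k+\epsilon_k)\subseteq B_{k,i}(\theta_k)\cup C_{k,i}(\theta_k)$ appearing in the proof of Theorem~\ref{the:TotalPower}.
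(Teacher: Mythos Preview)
Your proposal is correct and is essentially the same argument as the paper's, carried out in the dual form: the paper works with the non-rejection events $A_{k,j}$ and intersections, showing $\bar R_l(0)\subseteq\bar R_l(\Theta_l)$ and then $P(\bar R_K(0))\leq P(\bar R_l(0))$, whereas you work directly with the rejection events $C_{k,j}$ and unions. Your key step $B_{k,i}(\theta_k)\subseteq B_{k,i}(0)\cup C_{k,i}(0)$ for $\theta_k\le 0$ is precisely the De~Morgan dual of the paper's $B_{k,i}(\theta_k+\epsilon_k)\subseteq B_{k,i}(\theta_k)\cup A_{k,i}(\theta_k)$ for $\epsilon_k>0$, and the two-step chain (monotonicity in $\theta_k$, then enlarging the index set from $T(\Theta)$ to $\{1,\dots,K\}$) matches the paper's two inequalities exactly.
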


\begin{proof}
If $\mu_k -\mu_0=\theta_k$ for $k=1,\hdots, K$, the event that $H_{01},\hdots, H_{0K}$ all fail to be rejected is equivalent to
\begin{align*}
\bar{R}_K(\Theta)
& = \bigcap_{k \in \{m_1, \hdots, m_K \} } \Bigg{(} \bigcup^{J_k}_{j=1} \Bigg{[} \bigg{(} \bigcap^{j-1}_{i=1}B_{k,i}(\theta_k) \bigg{)} \cap A_{k,j}(\theta_k) \Bigg{]} \Bigg{)}.
\end{align*}

Then for any $\epsilon_k>0$,
\begin{align*}
\bigcup^{J_k}_{j=1}  \Bigg{[} \bigg{(} \bigcap^{j-1}_{i=1}B_{k,i}(\theta_k+\epsilon_k) \bigg{)} \cap A_{k,j}(\theta_k +\epsilon_k) \Bigg{]} \subseteq 
\bigcup^{J_k}_{j=1} \Bigg{[} \bigg{(} \bigcap^{j-1}_{i=1}B_{k,i}(\theta_k) \bigg{)} \cap A_{k,j}(\theta_k) \Bigg{]}.
\end{align*}
Take any
$$w=(Z_{k,1},\hdots,Z_{k,J}) \in \bigcup^{J_k}_{j=1}  \Bigg{[} \bigg{(} \bigcap^{j-1}_{i=1}B_{k,i}(\theta_k+\epsilon_k) \bigg{)} \cap A_{k,j}(\theta_k +\epsilon_k) \Bigg{]}.$$
For some $q \in \{1,\hdots, J_k \},$ for which $Z_{k,q} \in A_{k,q} (\theta_k +\epsilon_k)$ and $Z_{k,j} \in B_{k,j} (\theta_k +\epsilon_k)$ for $j=1,\hdots,q-1$. $Z_{k,q} \in A_{k,q} (\theta_k +\epsilon_k)$ implies that $Z_{k,q} \in A_{k,q} (\theta_k)$.
Furthermore $Z_{k,q} \in B_{k,q} (\theta_k +\epsilon_k)$ implies that $Z_{k,q} \in B_{k,q} (\theta_k) \cup A_{k,q} (\theta_k)$ for some $j=1,\hdots, q-1$.
Therefore,
\begin{equation*}
w \in \bigcup^{J_k}_{j=1} \Bigg{[} \bigg{(} \bigcap^{j-1}_{i=1}B_{k,i}(\theta_k) \bigg{)} \cap A_{k,j}(\theta_k) \Bigg{]}.
\end{equation*}

Next suppose for any $m_1, \hdots, m_K$ where $m_1 \in \{1,\hdots, K\} $ and $m_k \in \{1,\hdots, K\} \backslash \{ m_1,\hdots, m_{k-1} \}$ with $\theta_{m_1}, \hdots, \theta_{m_l} \leq 0$ and $\theta_{m_{l+1}}, \hdots, \theta_{m_K} > 0$. Let $\Theta_l=(\theta_{m_1},\hdots, \theta_{m_l})$. Then
\begin{align*}
P(& \text{reject at least one true } H_{0k} | \Theta)
\\
& = 1- P(\bar{R}_l (\Theta_l))
\\
& \leq 1- P(\bar{R}_l (0)) 
\\
& \leq 1- P(\bar{R}_K (0)) 
\\
&= P(\text{reject at least one true } H_{0k} | H_{0.G}).
\end{align*}
\end{proof}
The following proof was nearly identical to the one presented in \cite{GreenstreetPeter2021Ammp} and builds on the work of \cite{MagirrD.2012AgDt}. The only change from \cite{GreenstreetPeter2021Ammp} is now is $P( \text{reject at least}$ $\text{ one true } H_{0k} | \Theta) = 1- P(\bar{R}_l (\Theta_l))$ instead of being $P( \text{reject at least one true } $ $ H_{0k} | \Theta) \leq 1- P(\bar{R}_l (\Theta_l))$.
\section{Disjunctive power}
\label{SI:dispower}
As discussed in Section 2.3 of the main paper the disjunctive power is the probability of taking at least one treatment forward. The disjunctive power can therefore be calculated in a very similar way to the FWER, as done in Section 2.2 and the Supporting Information Section \ref{SI:proofFWER}, as here we want the probability of rejecting any null hypotheses $H_{01},\hdots,H_{0K}$. Therefore if $\mu_k-\mu_0 = \theta_k$ for $k = 1, \hdots, K$, the event that $H_{01},\hdots,H_{0K}$ all fail to be rejected is equivalent to $\bar{R}_K(\Theta)$. The disjunctive power ($P_d$) for given $\Theta=(\theta_1,\hdots,\theta_K)$ is: 
\begin{equation}
P_d=1-P(\bar{R}_K(\Theta))=1-\sum_{\substack{j_k=1 \\ k=1,2,\ldots,K}}^{J_k} \Phi(\mathbf{L}^{+}_{\mathbf{j_k}}(\Theta),\mathbf{U}^{+}_{\mathbf{j_k}}(\Theta),\Sigma_{\mathbf{j_k}})
\label{Disjuctivepower}
\end{equation}
where $\mathbf{U}^+_{\mathbf{j_k}}(\Theta)=(U^+_{1,j_1}(\theta_1), \hdots, U^+_{K,j_K}(\theta_K))$ and $\mathbf{L}^+_{\mathbf{j_k}}(\Theta)=(L^+_{1,j_1}(\theta_1), \hdots, L^+_{K,j_K}(\theta_K))$ with $U^+_{k,j_k}(\theta_k)$ and $L^+_{k,j_k}(\theta_k)$ defined in Equation (6) and Equation (5)  of the main paper, respectively. The correlation matrix  $\Sigma_{\mathbf{j_k}}$ is the same as that given for FWER in Equation (3) of the main paper. 

\section{O'Brien and Fleming boundaries and the Pocock boundaries}
\label{SI:OBFandpo}
The O'Brien and Fleming boundaries \citep{OBrienPeterC.1979AMTP} with the futility boundaries equal to zero for $j<J_K$, to remove the symmetric boundaries, which may be too stringent \citep{MagirrD.2012AgDt} give the stopping boundaries
\begin{equation*}
 \begin{pmatrix}
U_1 \\
U_2  \\ 
\end{pmatrix}=
\begin{pmatrix}
3.166 & 2.239 \\
3.166 & 2.239  \\ 
\end{pmatrix}, 
\:\:\:\:
\begin{pmatrix}
L_1 \\
L_2 \\ 
\end{pmatrix}=
\begin{pmatrix}
0 & 2.239 \\
0 & 2.239 \\ 
\end{pmatrix}.
\end{equation*}  
When the focus is on ensuring that the pairwise power is greater than 80\% the sample sizes are
\begin{equation*}
\begin{pmatrix}
n_{1,1}  & n_{1,2} \\
 n_{2,1}  & n_{2,2}   
\end{pmatrix} =
\begin{pmatrix}
 70  & 140  \\
 70  & 140   
\end{pmatrix},
\:\:\:\:
\begin{pmatrix}
n_{0,1,1}  & n_{0,1,2} \\
 n_{0,2,1}  & n_{0,2,2}  
\end{pmatrix} =\begin{pmatrix}
 70  & 140 \\
 140  & 210
\end{pmatrix}.
\:\:\:\:
\begin{pmatrix}
n(1) \\
n(2)
\end{pmatrix} =\begin{pmatrix}
 0 \\
 70 
\end{pmatrix}.
\end{equation*}  
 When ensuring that the conjunctive power is greater than 80\% the sample sizes are
\begin{equation*}
\begin{pmatrix}
n_{1,1}  & n_{1,2} \\
 n_{2,1}  & n_{2,2}   
\end{pmatrix} =
\begin{pmatrix}
 87  & 174  \\
 87  & 174   
\end{pmatrix},
\:\:\:\:
\begin{pmatrix}
n_{0,1,1}  & n_{0,1,2} \\
 n_{0,2,1}  & n_{0,2,2}  
\end{pmatrix} =\begin{pmatrix}
 87  & 174 \\
 174  & 261
\end{pmatrix}.
\:\:\:\:
\begin{pmatrix}
n(1) \\
n(2)
\end{pmatrix} =\begin{pmatrix}
 0 \\
 87
\end{pmatrix}.
\end{equation*} 
Table \ref{tab:resultsobf} shows the results for different values of $\theta_1$ and $\theta_2$ when the conjunctive power is greater than 80\% and when the pairwise power is greater than 80\%.

\begin{table}[]
\centering
 \caption{Operating characteristics of the proposed designs under different values of $\theta_1$ and $\theta_2$, for both control of pairwise power and of conjunctive power, when the proposed designs use O'Brien and Fleming boundaries \citep{OBrienPeterC.1979AMTP} with futility boundaries equal to zero.}

\begin{tabular}{c|c|c|c|c|c|c|c}

\multicolumn{8}{c}{\textbf{Design for pairwise power}}
\\
\hline
$\theta_1$ & $\theta_2$ & $P_{PW,1}$ & $P_{PW,2}$ &  $P_{C}$   & $P_{D}$ & $\max(N)$ & $E(N|\theta_1,\theta_2)$  \\
\hline
 $\theta'$ & $\theta'$ & 0.806  & 0.806 & 0.671 & 0.941 & 490 & 452.3 \\
 $\theta'$ & $0$ & 0.806  & 0.013 & 0.806 & 0.807 & 490 & 407.3   \\
 $\theta'$ & $-\infty$ & 0.806  & 0 & 0.806 & 0.806 & 490 & 337.4  \\
  0 & $\theta'$ & 0.013  & 0.806 & 0.806 & 0.807 & 490 & 429.7   \\
  0 & $0$ & 0.013  & 0.013 & 1 & 0.025 & 490 & 384.8   \\
 $-\infty$ & $\theta'$ & 0 & 0.806 & 0.806 & 0.806 & 490 & 394.8   \\
\hline

\multicolumn{8}{c}{\textbf{Design for conjunctive power}}
\\
\hline
$\theta_1$ & $\theta_2$ & $P_{PW,1}$ & $P_{PW,2}$ &  $P_{C}$   & $P_{D}$ & $\max(N)$ & $E(N|\theta_1,\theta_2)$  \\
\hline
 $\theta'$ & $\theta'$ & 0.889  & 0.889 & 0.801 & 0.977 & 609 & 545.5   \\
 $\theta'$ & $0$ & 0.889  & 0.013 & 0.889 & 0.889 & 609 & 500.7   \\
 $\theta'$ & $-\infty$ & 0.889  & 0 & 0.889 & 0.889 & 609 & 413.8   \\
  0 & $\theta'$ & 0.013  & 0.889 & 0.889 & 0.889 & 609 & 523.1   \\
  0 & $0$ & 0.013  & 0.013 & 1 & 0.025 & 609 & 478.3   \\
 $-\infty$ & $\theta'$ & 0 & 0.889 & 0.889 & 0.889 & 609 & 479.6  \\
\end{tabular}
\label{tab:resultsobf} 
\end{table}	

\par

The Pocock boundaries \citep{PocockStuartJ.1977GSMi} with the futility boundaries equal to zero for $j<J_K$, to remove the symmetric boundaries, which may be too stringent \citep{MagirrD.2012AgDt} give the stopping boundaries
\begin{equation*}
 \begin{pmatrix}
U_1 \\
U_2  \\ 
\end{pmatrix}=
\begin{pmatrix}
2.440 & 2.440 \\
2.440 & 2.440  \\ 
\end{pmatrix}, 
\:\:\:\:
\begin{pmatrix}
L_1 \\
L_2 \\ 
\end{pmatrix}=
\begin{pmatrix}
0 & 2.440 \\
0 & 2.440 \\ 
\end{pmatrix}.
\end{equation*}  
When the focus is on ensuring that the pairwise power is greater than 80\% the sample sizes are: 
\begin{equation*}
\begin{pmatrix}
n_{1,1}  & n_{1,2} \\
 n_{2,1}  & n_{2,2}   
\end{pmatrix} =
\begin{pmatrix}
 76  & 152  \\
 76  & 152   
\end{pmatrix},
\:\:\:\:
\begin{pmatrix}
n_{0,1,1}  & n_{0,1,2} \\
 n_{0,2,1}  & n_{0,2,2}  
\end{pmatrix} =\begin{pmatrix}
 76  & 152 \\
 152  & 228
\end{pmatrix}.
\:\:\:\:
\begin{pmatrix}
n(1) \\
n(2)
\end{pmatrix} =\begin{pmatrix}
 76 \\
 152 
\end{pmatrix}.
\end{equation*}  
 When ensuring that the conjunctive power is greater than 80\% the sample sizes are: 
\begin{equation*}
\begin{pmatrix}
n_{1,1}  & n_{1,2} \\
 n_{2,1}  & n_{2,2}   
\end{pmatrix} =
\begin{pmatrix}
 95  & 190  \\
 95  & 190   
\end{pmatrix},
\:\:\:\:
\begin{pmatrix}
n_{0,1,1}  & n_{0,1,2} \\
 n_{0,2,1}  & n_{0,2,2}  
\end{pmatrix} =\begin{pmatrix}
 95  & 190 \\
 190  & 285
\end{pmatrix}.
\:\:\:\:
\begin{pmatrix}
n(1) \\
n(2)
\end{pmatrix} =\begin{pmatrix}
 0 \\
 95
\end{pmatrix}.
\end{equation*} 
Table \ref{tab:resultspo} shows the results for different values of $\theta_1$ and $\theta_2$ when the conjunctive power is greater than 80\% and when the pairwise power is greater than 80\%.

\begin{table}[]
\centering
 \caption{Operating characteristics of the proposed designs under different values of $\theta_1$ and $\theta_2$, for both control of pairwise power and of conjunctive power, when the proposed designs use Pocock boundaries \citep{PocockStuartJ.1977GSMi} with futility boundaries equal to zero. }

\begin{tabular}{c|c|c|c|c|c|c|c}

\multicolumn{8}{c}{\textbf{Design for pairwise power}}
\\
\hline
$\theta_1$ & $\theta_2$ & $P_{PW,1}$ & $P_{PW,2}$ &  $P_{C}$   & $P_{D}$ & $\max(N)$ & $E(N|\theta_1,\theta_2)$  \\
\hline
 $\theta'$ & $\theta'$ & 0.802  & 0.802 & 0.662 & 0.941 & 532 & 429.3 \\
 $\theta'$ & $0$ & 0.802  & 0.013 & 0.802 & 0.802 & 532  & 420.6   \\
 $\theta'$ & $-\infty$ & 0.802  & 0 & 0.802 & 0.802 & 532  & 345.7  \\
  0 & $\theta'$ & 0.013  & 0.802 & 0.802 & 0.803 & 532  & 424.9   \\
  0 & $0$ & 0.013  & 0.013 & 1 & 0.025 & 532  & 416.3   \\
 $-\infty$ & $\theta'$ & 0 & 0.802 & 0.802 & 0.802 & 532  & 387.5   \\
\hline

\multicolumn{8}{c}{\textbf{Design for conjunctive power}}
\\
\hline
$\theta_1$ & $\theta_2$ & $P_{PW,1}$ & $P_{PW,2}$ &  $P_{C}$   & $P_{D}$ & $\max(N)$ & $E(N|\theta_1,\theta_2)$  \\
\hline
 $\theta'$ & $\theta'$ & 0.889 & 0.889 & 0.801 & 0.978 & 665 & 507.6   \\
 $\theta'$ & $0$ & 0.889  & 0.013 & 0.889 & 0.890 & 665 & 516.1   \\
 $\theta'$ & $-\infty$ & 0.889  & 0 & 0.889 & 0.889 & 665 & 422.5   \\
  0 & $\theta'$ & 0.013  & 0.889 & 0.889 & 0.890 & 665 & 511.9   \\
  0 & $0$ & 0.013  & 0.013 & 1 & 0.025 & 665 & 520.4   \\
 $-\infty$ & $\theta'$ & 0 & 0.889 & 0.889 & 0.889 & 665 & 465.1  \\
\end{tabular}
\label{tab:resultspo} 
\end{table}	

\section{Non-binding triangular stopping boundaries}
\label{SI:nontri}
The triangular boundaries \citep{WhiteheadJ.1997TDaA} with non-binding futility boundaries for the type I error, are
\begin{equation*}
U = \begin{pmatrix}
U_1 \\
U_2  \\ 
\end{pmatrix}=
\begin{pmatrix}
2.517 & 2.373 \\
2.517 & 2.373 \\
\end{pmatrix}, 
\:\:\:\:
L =
\begin{pmatrix}
L_1 \\
L_2 \\ 
\end{pmatrix}=
\begin{pmatrix}
0.839 & 2.373 \\
0.839 & 2.373 \\
\end{pmatrix}.
\end{equation*}  
When the focus is on ensuring that the pairwise power is greater than 80\% the sample sizes are: 
\begin{equation*}
\begin{pmatrix}
n_{1,1}  & n_{1,2} \\
 n_{2,1}  & n_{2,2}   
\end{pmatrix} =
\begin{pmatrix}
 77  & 154  \\
 77  & 154   
\end{pmatrix},
\:\:\:\:
\begin{pmatrix}
n_{0,1,1}  & n_{0,1,2} \\
 n_{0,2,1}  & n_{0,2,2}  
\end{pmatrix} =\begin{pmatrix}
 77  & 154 \\
 154  & 231
\end{pmatrix}.
\:\:\:\:
\begin{pmatrix}
n(1) \\
n(2)
\end{pmatrix} =\begin{pmatrix}
 0 \\
 77 
\end{pmatrix}.
\end{equation*}  
 When ensuring that the conjunctive power is greater than 80\% the sample sizes are: 
\begin{equation*}
\begin{pmatrix}
n_{1,1}  & n_{1,2} \\
 n_{2,1}  & n_{2,2}   
\end{pmatrix} =
\begin{pmatrix}
 97  & 194  \\
97  & 194  \\  
\end{pmatrix},
\:\:\:\:
\begin{pmatrix}
n_{0,1,1}  & n_{0,1,2} \\
 n_{0,2,1}  & n_{0,2,2}  
\end{pmatrix} =\begin{pmatrix}
 97  & 194  \\
 194  & 291
\end{pmatrix}.
\:\:\:\:
\begin{pmatrix}
n(1) \\
n(2)
\end{pmatrix} =\begin{pmatrix}
 0 \\
 97
\end{pmatrix}.
\end{equation*} 
Table \ref{tab:resultsnontri} shows the results for different values of $\theta_1$ and $\theta_2$ when the conjunctive power is greater than 80\% and when the pairwise power is greater than 80\%. As can be seen in these results unlike in Table \ref{tab:resultstri} the disjunctive power no longer equals the target of 2.5\% when $\theta_1,\theta_2=0$. This is because this is the FWER if one did use the lower boundaries for futility. Without these lower bounds the FWER is 2.5\%. This is the same for the PWER when looking at the pairwise power when $\theta_1$ or $\theta_2$ equals 0.

\begin{table}[]
\centering
 \caption{Operating characteristics of the proposed designs under different values of $\theta_1$ and $\theta_2$, for both control of pairwise power and of conjunctive power, when the proposed designs use triangular boundaries \citep{WhiteheadJ.1997TDaA} with non-binding futility boundaries for the type I error.}

\begin{tabular}{c|c|c|c|c|c|c|c}

\multicolumn{8}{c}{\textbf{Design for pairwise power}}
\\
\hline
$\theta_1$ & $\theta_2$ & $P_{PW,1}$ & $P_{PW,2}$ &  $P_{C}$   & $P_{D}$ & $\max(N)$ & $E(N|\theta_1,\theta_2)$  \\
\hline
 $\theta'$ & $\theta'$ & 0.802  & 0.802 & 0.663 & 0.942 & 539 & 426.6 \\
 $\theta'$ & $0$ & 0.802  & 0.012 & 0.802 & 0.803 & 539 & 377.5   \\
 $\theta'$ & $-\infty$ & 0.802  & 0 & 0.802 & 0.802 & 539 & 347.5  \\
  0 & $\theta'$ & 0.012  & 0.802 & 0.802 & 0.804 & 539 & 402.0   \\
  0 & $0$ & 0.012 & 0.012 & 1 & 0.024 & 539 & 353.0   \\
 $-\infty$ & $\theta'$ & 0 & 0.802 & 0.802 & 0.802 & 539 & 387.0   \\
\hline

\multicolumn{8}{c}{\textbf{Design for conjunctive power}}
\\
\hline
$\theta_1$ & $\theta_2$ & $P_{PW,1}$ & $P_{PW,2}$ &  $P_{C}$   & $P_{D}$ & $\max(N)$ & $E(N|\theta_1,\theta_2)$  \\
\hline
 $\theta'$ & $\theta'$ & 0.891  & 0.891 & 0.803 & 0.979 & 679 & 513.9 \\
 $\theta'$ & $0$ & 0.891  & 0.012 & 0.891 & 0.891 & 679 & 467.7   \\
 $\theta'$ & $-\infty$ & 0.891  & 0 & 0.891 & 0.891 & 679 & 430.0   \\
  0 & $\theta'$ & 0.012  & 0.891 & 0.891 & 0.891 & 679 & 490.8   \\
  0 & $0$ & 0.012  & 0.012 & 1 & 0.024 & 679 & 444.7   \\
 $-\infty$ & $\theta'$ & 0 & 0.891 & 0.891 & 0.891 & 679 & 471.9  \\
\end{tabular}
\label{tab:resultsnontri} 
\end{table}	

\section{Plots based on the results from Section 4.3 for the two and three stage designs}
\label{SI:plotsS4.3}
The plots for the 2 stage and 3 stage example trials as given in Table 3 are shown in Figure \ref{fig:PW4plots} and Figure \ref{fig:TP4plots}. These plots are similar to the once seen in Figure 1 and 2 of the main paper. The y-axis gives the sample size for the trial. The x-axis gives the amount of control patients recruited between each active treatment being added ($n(k)-n(k-1)$). Plotted on the graph is  the maximum sample size and the expected sample size under the different configurations considered in Table \ref{tab:FWERvsPWER}. Figure \ref{fig:PW4plots} gives the plots when the pairwise power is controlled at 80\% and Figure \ref{fig:TP4plots} gives the plots when the conjunctive power is controlled at 80\%. As can be seen in Figure \ref{fig:TP4plots} there are times where there are less lines than expected. This is simply caused by the points when separate trials becomes better than running the proposed platform trial is at the same point for multiple different $\Theta$, as seen in Table \ref{tab:FWERvsPWER}, therefore the lines overlap. 

\begin{figure}[]

  \centering
  \includegraphics[width=.49\linewidth,trim= 0 0.5cm 0 0cm, clip]{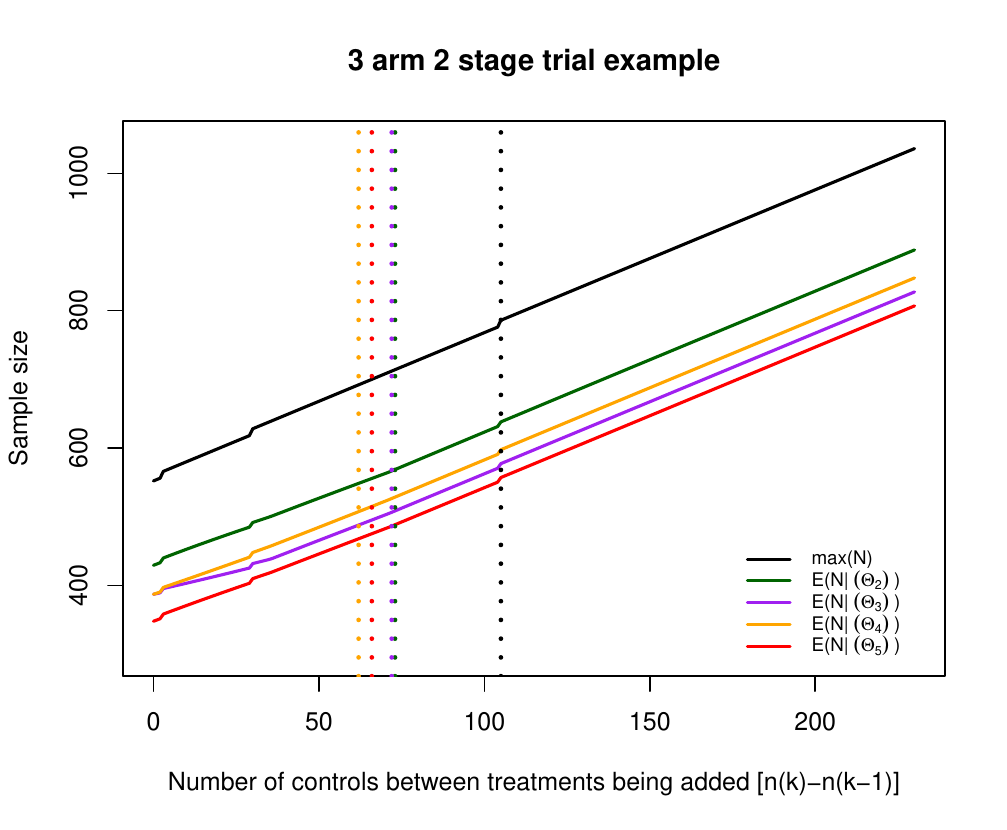}  
  \includegraphics[width=.49\linewidth,trim= 0 0.5cm 0 0cm, clip]{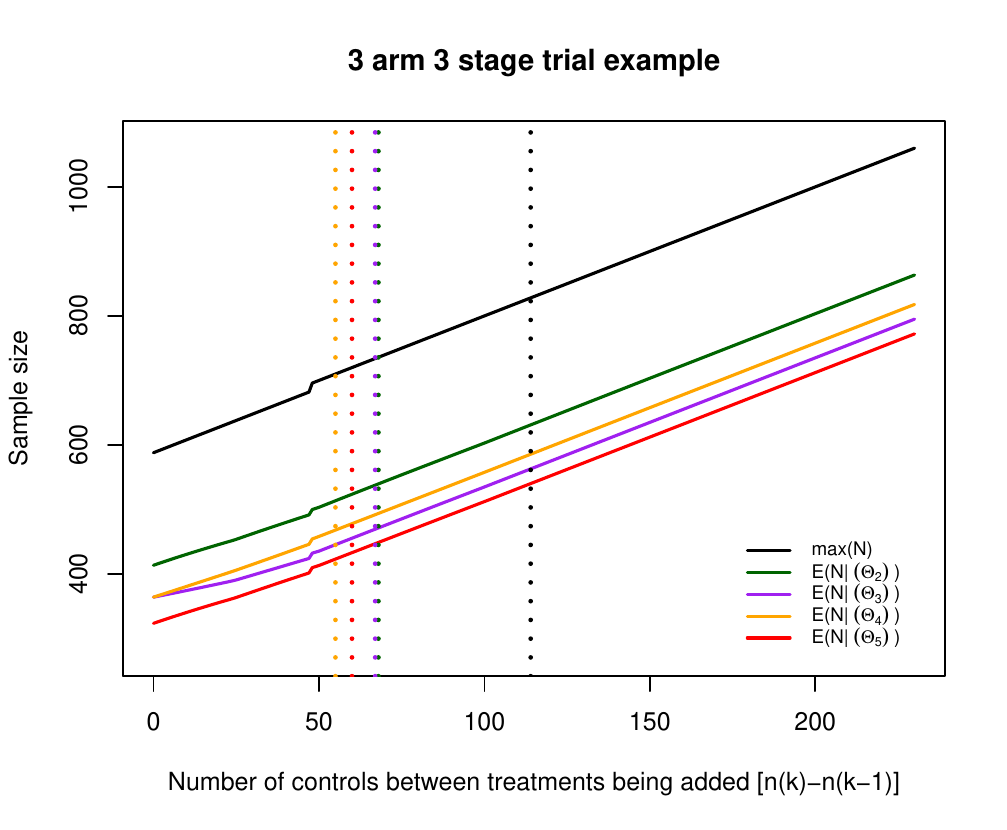}  
  \includegraphics[width=.49\linewidth,trim= 0 0.5cm 0 0cm, clip]{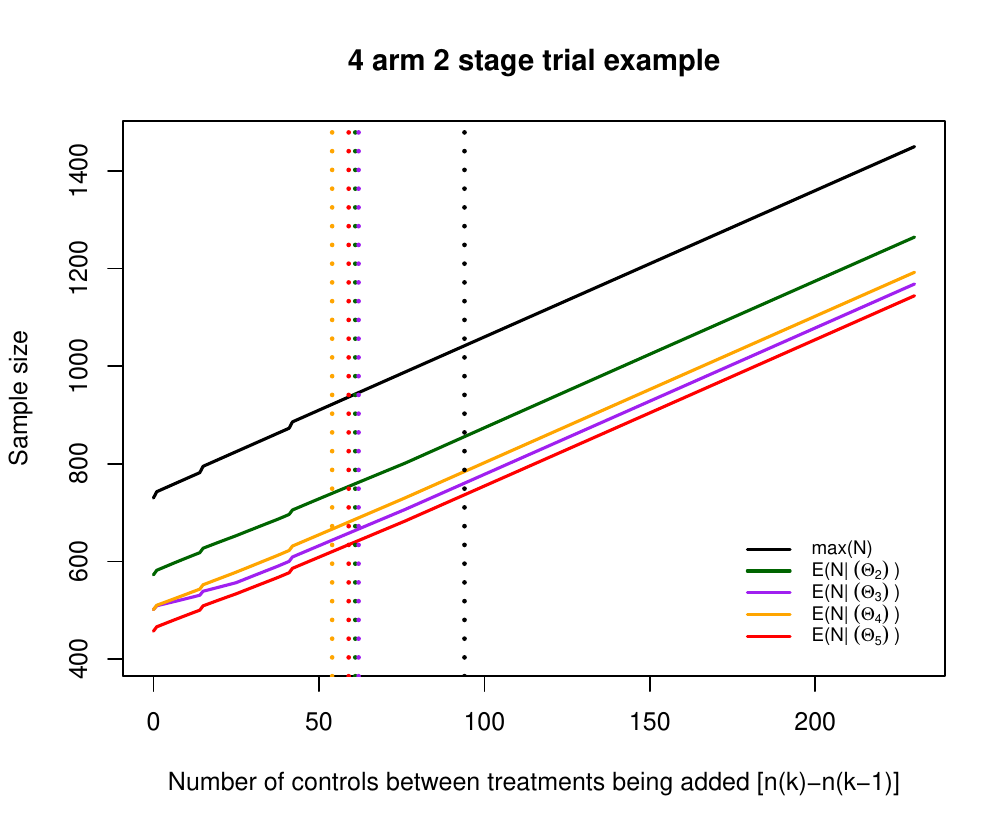}  
  \includegraphics[width=.49\linewidth,trim= 0 0.5cm 0 0cm, clip]{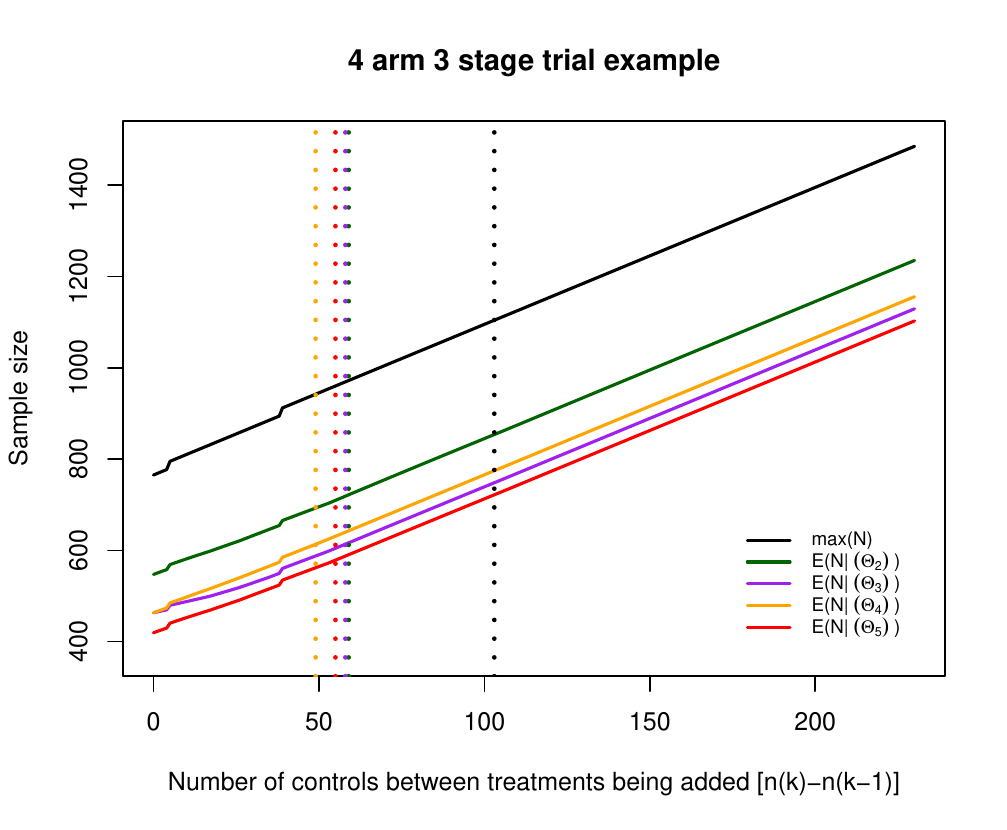}  

\caption{The maximum sample size and the expected sample size under different $\Theta$ depending on the value $n(k)-n(k-1)$, for the pairwise power control of 80\% and FWER of 5\% one-sided. The dash vertical lines correspond to the points where the maximum or expected sample size of the trial is now greater than running separate trials which each have type I error control of 2.5\% one-sided.}
\label{fig:PW4plots}
\end{figure}

\begin{figure}[]

  \centering
  \includegraphics[width=.49\linewidth,trim= 0 0.5cm 0 0cm, clip]{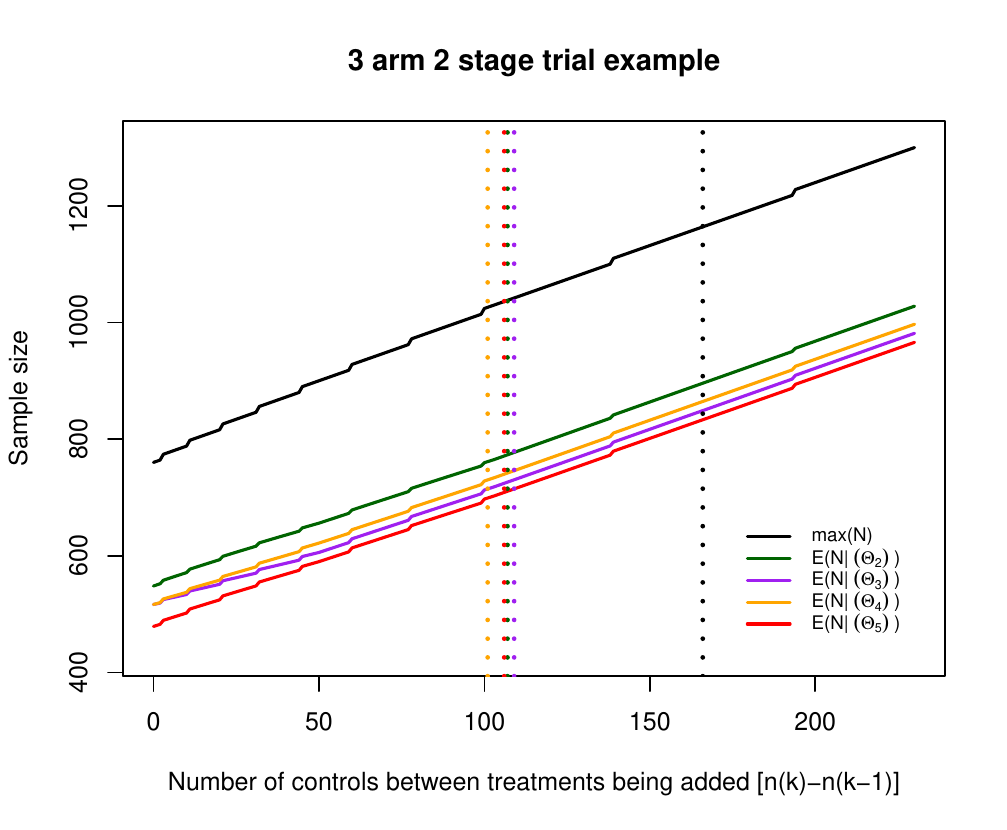}  
  \includegraphics[width=.49\linewidth,trim= 0 0.5cm 0 0cm, clip]{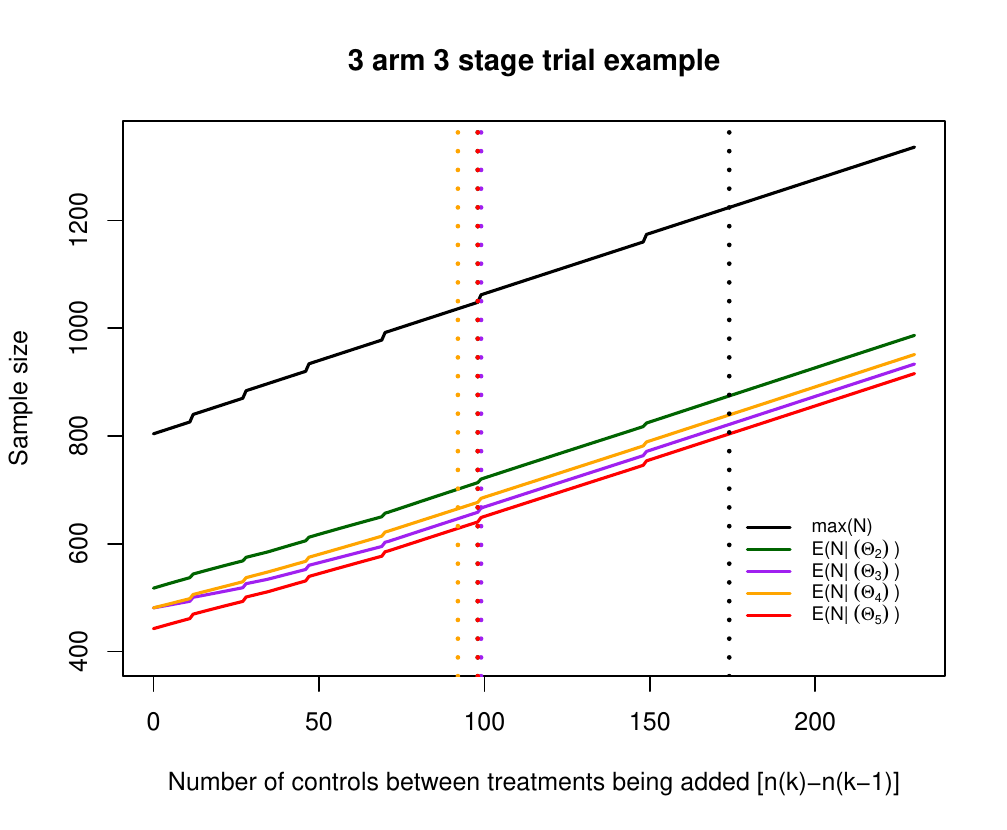}  
  \includegraphics[width=.49\linewidth,trim= 0 0.5cm 0 0cm, clip]{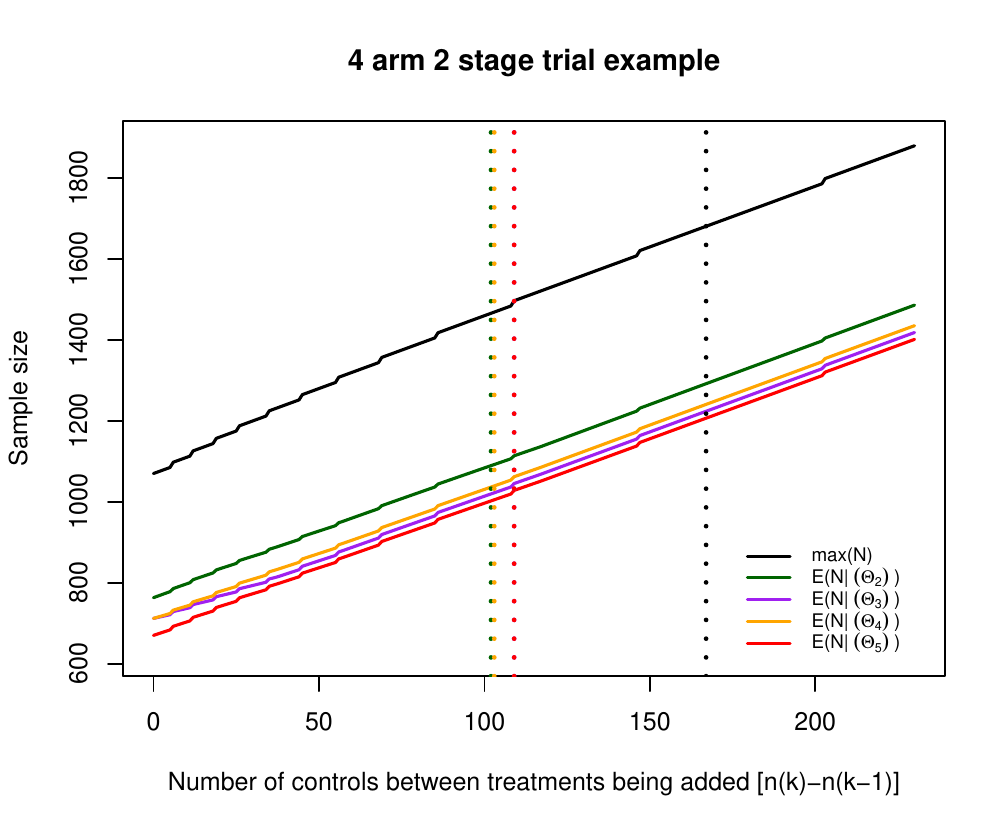}  
  \includegraphics[width=.49\linewidth,trim= 0 0.5cm 0 0cm, clip]{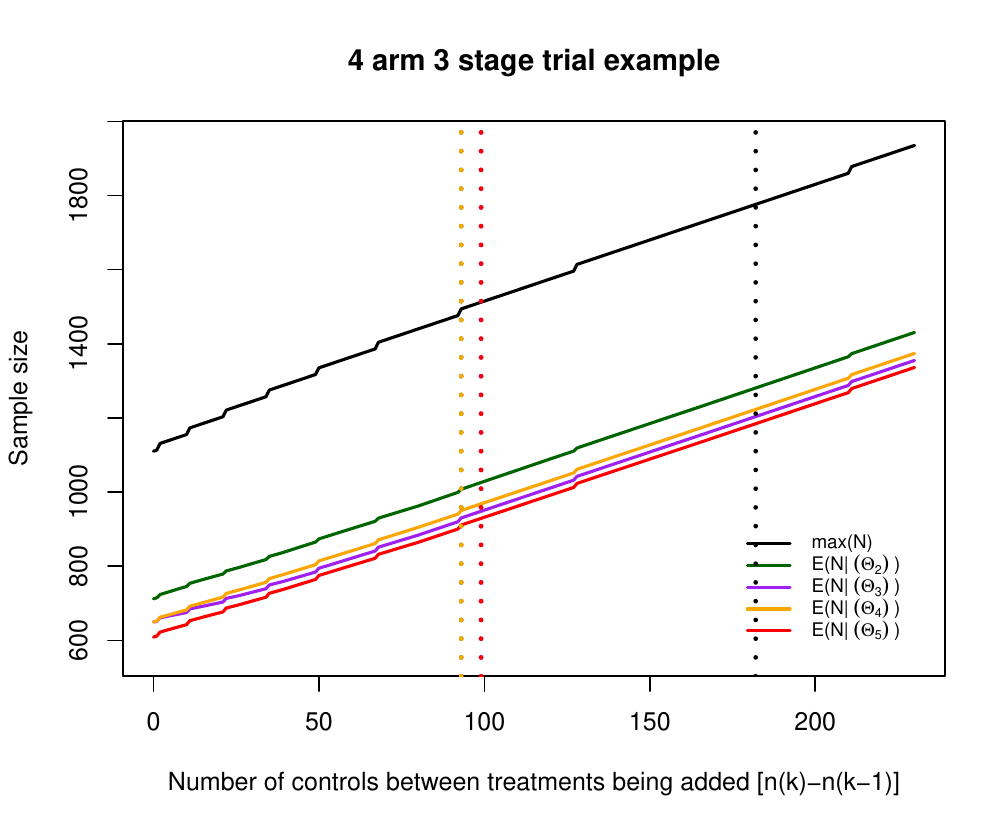}  

\caption{The maximum sample size and the expected sample size under different $\Theta$ depending on the value $n(k)-n(k-1)$, for the conjunctive power control of 80\% and FWER of 5\% one-sided. The dash vertical lines correspond to the points where the maximum or expected sample size of the trial is now greater than running separate trials which each have type I error control of 2.5\% one-sided.}
\label{fig:TP4plots}
\end{figure}

\end{document}